\newcommand{\scrod}{\quad\nopagebreak}
\newcommand{\timecomplexity}{O({n(\log\log n)\over%\epsilon
\sum_{i=1}^n a_i})}
\newcommand{\prob}{{\rm Pr}}
\newcommand{\appsum}{{\rm apx\_sum}}
\begin{document}%\baselineskip 20pt

\date{}

\title{Sublinear Time Approximate Sum via Uniform Random Sampling}
\author{Bin Fu\inst{1}\and Wenfeng Li\inst{2} \and Zhiyong Peng\inst{2}}
 \institute{{Department of Computer
Science\\
 University of Texas-Pan American,
 Edinburg, TX 78539, USA.\\
\email{binfu@cs.panam.edu}} \and {Computer School\\
Wuhan University,
 Wuhan, P. R. China\\
 \email{eyestar\_2008@126.com, peng@whu.edu.cn}} } \maketitle

%\title{Sublinear Time Approximate Sum via Uniform Random Sampling %and B-Tree
%}

%\author{
%Bin Fu$^1$, Wenfeng Li$^2$ and Zhiyong Peng$^2$
% \\ \\
%$^1$Department of Computer Science\\
% University of Texas-Pan American\\
% Edinburg, TX 78539, USA\\
%bfu@utpa.edu\\
%\\\\
% $^2$Computer School\\
%Wuhan University,
% Wuhan, P. R. China\\
%eyestar\_2008@126.com, peng@whu.edu.cn
%\\\\
%\\{\bf (Not for Distribution)}
%} \maketitle

\begin{abstract} We investigate the approximation for computing
the sum $a_1+\cdots +a_n$ with  an input of a list of nonnegative
elements $a_1,\cdots, a_n$. If all elements are in the range
$[0,1]$, there is a randomized algorithm that can compute an
$(1+\epsilon)$-approximation for the sum problem in time
$\timecomplexity$, where $\epsilon$ is a constant in $(0,1)$. Our
randomized algorithm is based on the uniform random sampling, which
selects one element with equal probability from the input list each
time. We also prove a lower bound $\Omega({n\over \sum_{i=1}^n
a_i})$, which almost matches the upper bound, for this problem.
%Finally, we present a revised B-tree
%structure that computes the sum of those nodes with values in a
%given range $[a,b]$ in $O(\log n)$ deterministic time while still
%maintains the operations insertion, deletion, and finding in $O(\log n)$ time.
\end{abstract}

%\end{document}

\centerline {{\bf Key words:}  Randomization;   Approximate Sum;
%B-tree;
Sublinear Time.}

\section{Introduction}

Computing the sum of a list of elements has many applications. This
problem can be found in the high school textbooks. In the textbook
of calculus, we often see how to compute the sum of a list of
elements, and decide if it converges when the number of items is
infinite. Let $\epsilon$ be a real number at least  $0$. Real number
$s$ is an $(1+\epsilon)$-approximation for the sum problem
$a_1,a_2,\cdots, a_n$ if ${\sum_{i=1}^n a_i\over 1+\epsilon}\le s\le
(1+\epsilon)\sum_{i=1}^na_i$. When we have a huge number of data
items and need to compute their sum, an efficient approximation
algorithm becomes essential. Due to the fundamental importance of
this problem, looking for the sublinear time solution for it is an
interesting topic of research.

A similar problem is to compute the mean of a list of items
$a_1,a_2,\cdots, a_n$, whose mean is defined by
${a_1+a_2+\cdots+a_n\over n}$. Using $O({1\over \epsilon^2}\log
{1\over \delta})$ random samples, one can compute the
$(1+\epsilon)$-approximation for the mean, or decides if it is at
most $\delta$~\cite{Hoefding63}. In~\cite{CanettiEvenGoldreich95},
Canetti, Even, and Goldreich showed that the sample size is tight.
In~\cite{MotwaniPanigrahyXu07}, Motwani, Panigrahy, and Xu showed an
$O(\sqrt{n})$ time approximation scheme for computing the sum of $n$
nonnegative elements.  A priority sampling approach for estimating
subsets were studied
in~\cite{AlonDuffieldLundThorup05,DuffieldLundThorup05,BroderFonturaJosifovskiKumarMotwaniNabarPanigrahyTomkinsXu06}.
Using  different cost and application models,  they tried to build a
sketch so that the sum of any subset can be computed approximately
via the sketch.

We feel the uniform sampling is more justifiable than the weighted
sampling. In this paper, we study the approximation for the sum
problem under both deterministic model and randomized model.  In the
randomized model, we still use the uniform random samplings, and
show how the time is reversely depend on the total sum $\sum_{i=1}^n
a_i$. We also prove a lower bound that matches this time bound. An
algorithm of time complexity $\timecomplexity$ for computing a list
of nonnegative elements $a_1,\cdots, a_n$ in $[0,1]$ can be extended
to a general list of nonnegative elements. It implies an algorithm
of time complexity $O({MC(n)\over \sum_{i=1}^n a_i})$ for computing
a list of nonnegative elements of size at most $M$ by converting
each $a_i$ into ${a_i\over M}$, which is always in the range
$[0,1]$.

% Finally, we present a revised B-tree structure that computes the sum in a given range in $O(\log n)$ deterministic time.

%\section{Notations}

%\begin{definition}

%\end{definition}

\section{Randomized Algorithm for the Sum Problem}

In this section, we present a randomized algorithm for computing the
approximate sum of a list of numbers in $[0,1]$.

\subsection{Chernoff Bounds}\label{chernoff-sec}
 The analysis of our randomized
algorithm often use  the well known Chernoff bounds, which are
described below. All proofs of this paper are self-contained except
the following famous theorems in probability theory.

\begin{theorem}[\cite{MotwaniRaghavan00}]\label{chernoff-theorem}
Let $X_1,\ldots , X_n$ be $n$ independent random $0$-$1$ variables,
where $X_i$ takes $1$ with probability $p_i$. Let $X=\sum_{i=1}^n
X_i$, and $\mu=E[X]$. Then for any $\theta>0$,
\begin{enumerate}
\item $\Pr(X<(1-\theta)\mu)<e^{-{1\over 2}\mu\theta^2}$, and
\item
$\Pr(X>(1+\theta)\mu)<\left[{e^{\theta}\over
(1+\theta)^{(1+\theta)}}\right]^{\mu}$.
\end{enumerate}
\end{theorem}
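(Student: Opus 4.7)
The plan is to use the standard Chernoff exponential-moment method. First I would introduce a parameter $t>0$ and apply Markov's inequality to the non-negative random variable $e^{tX}$, which gives
$$\Pr(X > (1+\theta)\mu) = \Pr(e^{tX} > e^{t(1+\theta)\mu}) \le \frac{E[e^{tX}]}{e^{t(1+\theta)\mu}}.$$
Next I would exploit independence to factor the moment generating function as $E[e^{tX}] = \prod_{i=1}^n E[e^{tX_i}] = \prod_{i=1}^n \bigl(1 + p_i(e^t-1)\bigr)$, and then use the pointwise inequality $1+x \le e^x$ on each factor to obtain $E[e^{tX}] \le e^{\mu(e^t-1)}$. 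Substituting this back yields $\Pr(X>(1+\theta)\mu) \le e^{\mu(e^t-1) - t(1+\theta)\mu}$.

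The right-hand side is then minimized over $t>0$ by calculus: setting the derivative of the exponent to zero gives $t = \ln(1+\theta)$. Plugging this optimal choice in collapses the bound exactly to $[e^\theta/(1+\theta)^{1+\theta}]^\mu$, proving part (2) of the theorem.

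For the lower tail in part (1), I would run the same argument with a parameter $t<0$ (equivalently, apply Markov to $e^{-sX}$ with $s>0$), again factor over independence using $1+x\le e^x$, and optimize over $s$. This gives a raw optimized bound of the form $[e^{-\theta}/(1-\theta)^{1-\theta}]^\mu$. The main obstacle is then converting this somewhat unwieldy expression into the clean Gaussian-style form $e^{-\mu\theta^2/2}$ stated in the theorem. The key step is the elementary numerical inequality
$$(1-\theta)\ln(1-\theta) \ge -\theta + \tfrac{1}{2}\theta^2 \quad \text{for all } \theta \in (0,1),$$
which I would establish by expanding $\ln(1-\theta) = -\sum_{k\ge 1}\theta^k/k$ and comparing terms (or equivalently by checking that the difference is zero at $\theta=0$ with non-negative derivative throughout the interval). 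Exponentiating and raising to the $\mu$-th power then yields the claimed bound $e^{-\mu\theta^2/2}$. The rest of the proof is routine bookkeeping; the only subtle point is verifying that the Taylor-type inequality above holds uniformly on $(0,1)$, since the obvious quadratic truncation is not tight near $\theta=1$ and one has to keep enough higher-order terms.
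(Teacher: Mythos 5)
The paper does not actually prove this theorem---it is quoted from the cited textbook as one of the ``famous theorems in probability theory'' whose proofs are explicitly not reproduced---and your argument is precisely the standard exponential-moment (Markov on $e^{tX}$, factor by independence, bound $1+x\le e^x$, optimize $t$) derivation used in that reference, carried out correctly, including the key inequality $(1-\theta)\ln(1-\theta)\ge -\theta+\tfrac{1}{2}\theta^2$ needed to pass from the raw lower-tail bound $\left[e^{-\theta}/(1-\theta)^{1-\theta}\right]^{\mu}$ to the Gaussian form $e^{-\mu\theta^2/2}$. The only cosmetic remark is that part (1) is claimed for all $\theta>0$ while your Taylor argument lives on $\theta\in(0,1)$; this loses nothing because for $\theta\ge 1$ the event $X<(1-\theta)\mu$ is impossible ($X\ge 0$), so the bound holds trivially there.
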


We follow the proof of Theorem~\ref{chernoff-theorem} to make the
following versions (Theorem~\ref{ourchernoff2-theorem}, and
Theorem~\ref{chernoff3-theorem}) of Chernoff bound for our algorithm
analysis.
%Their proofs are in the appendix.
%Its proof makes our entire paper self-contained.

\begin{theorem}\label{chernoff3-theorem}
Let $X_1,\ldots , X_n$ be $n$ independent random $0$-$1$ variables,
where $X_i$ takes $1$ with probability at least $p$ for $i=1,\ldots
, n$. Let $X=\sum_{i=1}^n X_i$, and $\mu=E[X]$. Then for any
$\theta>0$,
 $\Pr(X<(1-\theta)pn)<e^{-{1\over 2}\theta^2 pn}$.
\end{theorem}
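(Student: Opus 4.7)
My plan is to reduce the statement to the standard lower-tail Chernoff bound (Theorem~\ref{chernoff-theorem}, part 1) by a monotone coupling that replaces each $X_i$ with a Bernoulli variable of probability exactly $p$.

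First I would introduce auxiliary independent random variables $Y_1,\ldots,Y_n$ on an enlarged probability space so that (i) $\Pr(Y_i=1)=p$ exactly, and (ii) $Y_i\le X_i$ pointwise. Such a coupling exists because $\Pr(X_i=1)\ge p$: for each $i$, independently draw $U_i\sim\text{Uniform}(0,1)$ and set $Y_i=\mathbf{1}[U_i\le p]$ and $X_i=\mathbf{1}[U_i\le p_i]$ where $p_i=\Pr(X_i=1)\ge p$. This is the only conceptual step; everything else is mechanical.

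Next, let $Y=\sum_{i=1}^n Y_i$. Since $Y_i\le X_i$ for every $i$, we have $Y\le X$ pointwise, hence
\[
\Pr\bigl(X<(1-\theta)pn\bigr)\le\Pr\bigl(Y<(1-\theta)pn\bigr).
\]
Now $E[Y]=pn$, so Theorem~\ref{chernoff-theorem}(1) applied to $Y$ gives
\[
\Pr\bigl(Y<(1-\theta)\cdot pn\bigr)<e^{-\frac{1}{2}pn\theta^{2}},
\]
and combining the two inequalities yields the desired bound. The case $\theta\ge 1$ is trivial because then $(1-\theta)pn\le 0\le X$, so the left-hand side is $0$.

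There is no real obstacle here: the substantive work is already absorbed in Theorem~\ref{chernoff-theorem}, and the only subtlety is noticing that one cannot directly substitute $\mu$ by $pn$ in the cited bound (since $\mu\ge pn$, the inequality $\Pr(X<(1-\theta)\mu)<e^{-\frac12\mu\theta^2}$ does not by itself imply the analogous statement with $\mu$ replaced by $pn$ in both places — the threshold moves as well). The coupling sidesteps this cleanly by producing a dominated sum whose mean is exactly $pn$. An alternative route, if one prefers to avoid coupling, is to repeat the moment-generating-function proof of Theorem~\ref{chernoff-theorem}(1) verbatim and use $\sum_{i=1}^n p_i\ge pn$ together with $1-x\le e^{-x}$ to obtain $E[e^{-tX}]\le e^{-(1-e^{-t})pn}$, then optimise $t=-\ln(1-\theta)$; but this duplicates work that is already in the cited theorem.
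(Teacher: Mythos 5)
Your coupling argument is correct: the monotone coupling $Y_i=\mathbf{1}[U_i\le p]\le\mathbf{1}[U_i\le p_i]=X_i$ gives $Y\le X$ with $E[Y]=pn$ exactly, and applying Theorem~\ref{chernoff-theorem}(1) to $Y$ yields the claim. The paper gives no explicit proof here; it only says the bound is obtained by ``following the proof of Theorem~\ref{chernoff-theorem},'' i.e.\ by rerunning the moment-generating-function computation with $\sum_i p_i\ge pn$ --- which is your stated alternative route. So your primary argument is a genuinely different (and self-contained) reduction that reuses the cited theorem as a black box rather than reopening its proof. One correction, though: the ``subtlety'' you flag is not actually an obstacle for this lower-tail bound. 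Since $\mu=\sum_i p_i\ge pn$, for $\theta\in(0,1)$ you have $(1-\theta)pn\le(1-\theta)\mu$, hence $\Pr\bigl(X<(1-\theta)pn\bigr)\le\Pr\bigl(X<(1-\theta)\mu\bigr)<e^{-\frac12\mu\theta^2}\le e^{-\frac12 pn\theta^2}$, and the case $\theta\ge1$ is trivial as you note; both the threshold and the exponent move in the favorable direction, so direct substitution already works and the coupling is more machinery than needed. The place where that substitution genuinely fails, and where a coupling or a fresh MGF computation is actually required, is the companion upper-tail bound (Theorem~\ref{ourchernoff2-theorem}), where $\mu\le pn$ makes the exponent-monotonicity go the wrong way because $g_2(\theta)^{\mu}\ge g_2(\theta)^{pn}$.
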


\begin{theorem}\label{ourchernoff2-theorem}
Let $X_1,\ldots , X_n$ be $n$ independent random $0$-$1$ variables,
where $X_i$ takes $1$ with probability at most $p$ for $i=1,\ldots ,
n$. Let $X=\sum_{i=1}^n X_i$. Then for any $\theta>0$,
$\Pr(X>(1+\theta)pn)<\left[{e^{\theta}\over
(1+\theta)^{(1+\theta)}}\right]^{pn}$.
\end{theorem}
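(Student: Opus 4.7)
The plan is to mirror the standard moment-generating-function proof of Theorem \ref{chernoff-theorem}(2), but to first reduce the inhomogeneous case (each $X_i$ has its own success probability $p_i \le p$) to the homogeneous case where every variable has success probability exactly $p$. This reduction is the key new ingredient; once it is in place, the rest of the argument is identical to the classical upper-tail Chernoff derivation.

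First I would introduce independent auxiliary random variables $Y_1,\ldots,Y_n$ with $\Pr(Y_i=1)=p$ and $\Pr(Y_i=0)=1-p$. For any $t>0$ and any $p_i\in[0,p]$ I would observe that
\[
E[e^{tX_i}] \;=\; p_i e^t+(1-p_i) \;=\; 1+p_i(e^t-1) \;\le\; 1+p(e^t-1) \;=\; E[e^{tY_i}],
\]
since $e^t-1>0$ and the expression is monotone in $p_i$. By independence of the $X_i$'s and of the $Y_i$'s, multiplying these inequalities gives
\[
E[e^{tX}] \;=\; \prod_{i=1}^n E[e^{tX_i}] \;\le\; \prod_{i=1}^n E[e^{tY_i}] \;=\; E[e^{tY}],
\]
where $Y=\sum_{i=1}^n Y_i$ is a binomial random variable with mean $\mu=pn$.

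Next I would apply Markov's inequality to $e^{tX}$: for any $t>0$,
\[
\Pr(X>(1+\theta)pn) \;=\; \Pr(e^{tX}>e^{t(1+\theta)pn}) \;\le\; \frac{E[e^{tX}]}{e^{t(1+\theta)pn}} \;\le\; \frac{E[e^{tY}]}{e^{t(1+\theta)pn}}.
\]
Using $E[e^{tY}]=(1+p(e^t-1))^n\le e^{pn(e^t-1)}$ (from $1+x\le e^x$), this becomes $\exp\bigl(pn(e^t-1)-t(1+\theta)pn\bigr)$. Optimizing by choosing $t=\ln(1+\theta)$ yields exactly $\bigl[e^{\theta}/(1+\theta)^{1+\theta}\bigr]^{pn}$, which is the claimed bound.

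The only real obstacle is the domination step in the first paragraph; after that, the calculation is a line-by-line copy of the standard proof of Theorem \ref{chernoff-theorem}(2), with $\mu$ replaced by the worst-case value $pn$. I would emphasize that this replacement is legitimate precisely because $e^{\theta}/(1+\theta)^{1+\theta}<1$ for $\theta>0$, so raising it to the larger exponent $pn \ge \mu=\sum p_i$ only strengthens the bound we have just derived for $Y$.
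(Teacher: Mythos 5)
Your proof is correct and is essentially what the paper intends: the paper gives no explicit proof of Theorem~\ref{ourchernoff2-theorem}, saying only that it ``follows the proof of'' Theorem~\ref{chernoff-theorem}, and your moment-generating-function domination $E[e^{tX_i}]\le E[e^{tY_i}]$ followed by the standard Markov/optimization steps is the natural way to carry that out. One minor caveat: your closing remark about ``raising to the larger exponent $pn\ge\mu$'' is superfluous and slightly misstated --- the auxiliary variable $Y$ has mean exactly $pn$, so no exponent replacement ever occurs in your derivation (and note that a naive black-box application of Theorem~\ref{chernoff-theorem} to $X$ would yield the \emph{weaker} exponent $\mu$, not $pn$, which is exactly why the domination step is needed) --- but the argument is already complete without that paragraph.
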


Define $g_1(\theta)=e^{-{1\over 2}\theta^2}$ and
$g_2(\theta)={e^{\theta}\over (1+\theta)^{(1+\theta)}}$. Define
$g(\theta)=\max(g_1(\theta),g_2(\theta))$. We note that
$g_1(\theta)$ and $g_2(\theta)$ are always strictly less than $1$
for all $\theta>0$. It is trivial for $g_1(\theta)$. For
$g_2(\theta)$, this can be verified by checking that the  function
$f(x)=x-(1+x)\ln (1+x)$ is decreasing and $f(0)=0$. This is because
$f'(x)=-\ln (1+x)$ which is strictly less than $0$ for all $x>0$.
Thus, $g_2(\theta)$ is also decreasing, and less than $1$ for all
$\theta>0$.

\subsection{A Sublinear Time Algorithm}

In this section, we show an algorithm to compute the approximate sum
in a sublinear time in the cases that $\sum_{i=1}^n a_i$ is at least
$(\log\log n)^{1+\epsilon}$ for any constant $\epsilon>0$. This is a
randomized algorithm with uniform random sampling.

\begin{theorem}\label{main-thm}
Let $\epsilon$ be a positive constant in $(0,1)$. There is a
sublinear time algorithm such that given  a list of items
$a_1,a_2,\cdots, a_n$ in $[0,1]$, it gives a
$(1+\epsilon)$-approximation in the time $\timecomplexity$.
\end{theorem}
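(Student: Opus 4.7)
A uniform random sample $X$ drawn from $a_1,\ldots,a_n$ satisfies $X\in[0,1]$ and $E[X]=S/n$, where $S:=\sum_{i=1}^n a_i$, so if we draw $t$ independent samples and form $\widehat S=(n/t)\sum_{j=1}^t X_j$, the Chernoff bounds of Theorems~\ref{chernoff3-theorem} and~\ref{ourchernoff2-theorem} give $\widehat S\in(1\pm\epsilon)S$ whenever $t=\Omega(n\log(1/\delta)/(\epsilon^2 S))$. The difficulty is that $S$ is unknown, so we cannot directly pick $t$; the $\log\log n$ overhead in the target bound is meant to buy us a union bound over a search for the right scale.

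I would adopt an adaptive doubling search. Set candidate levels $s_k=n/2^k$ for $k=0,1,\ldots,K$ with $K=\lceil\log n\rceil$. In phase $k$, draw $t_k=\Theta(n\log\log n/(\epsilon^2 s_k))$ fresh uniform samples and form $\widehat S_k$ as above; accept the first index $k$ at which $\widehat S_k\ge s_k$ and output $\widehat S_k$. Because $t_k$ doubles from one phase to the next, a geometric sum shows the total number of samples is dominated by the final accepted phase.

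To analyse correctness, let $k^*$ be the first index with $s_{k^*}\le 2S$, so $s_{k^*}\in[S,2S]$. At this ``correct'' level the expected empirical sum is $\mu_{k^*}=t_{k^*}S/n=\Theta(\log\log n/\epsilon^2)$, so Theorem~\ref{chernoff3-theorem} yields $|\widehat S_{k^*}-S|\le\epsilon S$ with failure probability $e^{-\Omega(\mu_{k^*})}=(\log n)^{-\Omega(1)}$, provided the constant hidden in $t_k$ is taken large enough. At a premature level $k<k^*$ we have $s_k/S\ge 2$, so Theorem~\ref{ourchernoff2-theorem} with $1+\theta=s_k/S$ bounds the probability of a false acceptance by $g_2(\theta)^{\mu_k}$, and plugging in $\mu_k=\Theta(\log\log n/\epsilon^2)\cdot(S/s_k)$ yields a bound of $(\log n)^{-\Omega(\log(s_k/S))}$. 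Summing this geometric-in-$\log(s_k/S)$ series over $k<k^*$ keeps the total false-positive probability at $(\log n)^{-\Omega(1)}$, so a single union bound covers all phases.

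For the running time, $t_k=\Theta(2^k\log\log n/\epsilon^2)$ doubles every phase, so $\sum_{k\le k^*}t_k=O(t_{k^*})=O(n\log\log n/(\epsilon^2 S))=\timecomplexity$ for constant $\epsilon$. The main technical obstacle is calibrating the acceptance threshold and the constant in $t_k$ so that both (i) the false-positive probability at any premature level $k<k^*$ and (ii) the false-negative probability at $k=k^*$ are simultaneously polynomially small in $\log n$; the two flavours of Chernoff bound already established as Theorems~\ref{chernoff3-theorem} and~\ref{ourchernoff2-theorem} are exactly the right tools for this balance.
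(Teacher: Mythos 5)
Your route is genuinely different from the paper's. The paper never estimates $\sum_i a_i$ directly from the sampled values; instead it buckets $[0,1]$ into a $(\delta,\delta)$-partition of geometrically shrinking intervals $I_j$, uses uniform sampling to estimate the \emph{counts} $A(I_j,L)$, reconstructs the sum as $\sum_j \hat A(I_j,L)\pi_j$, and separately argues (Lemma~\ref{lemma.3}) that intervals too sparse to be seen by the sample contribute only an $O(\delta)$ fraction of the total. Your proposal collapses all of that into the single estimator $\widehat S=(n/t)\sum_j X_j$ with the same doubling/threshold-crossing stopping rule, which is cleaner and removes the whole ``ignored intervals'' bookkeeping. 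Your stopping analysis (false acceptance at premature scales bounded by $g_2(\theta)^{\mu_k}$ with $1+\theta=s_k/S$, summed geometrically in $\log(s_k/S)$) is correct and is, if anything, a sharper way to organize what the paper does in Lemma~\ref{lemma.5}.

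There is, however, one concrete gap, and it is exactly the issue the paper's bucketing is designed to avoid: Theorems~\ref{chernoff3-theorem} and~\ref{ourchernoff2-theorem} are stated only for independent \emph{$0$--$1$} random variables, whereas your $X_j$ are arbitrary values in $[0,1]$. As written, the invocation of these theorems to conclude $\widehat S\in(1\pm\epsilon)S$ and to bound the false-acceptance probability is not licensed by their hypotheses. The multiplicative Chernoff bound does extend to independent $[0,1]$-valued variables (via the convexity estimate $E[e^{\lambda X}]\le 1+E[X](e^{\lambda}-1)$), but you must either state and prove that extension or reduce to indicators the way the paper does; your proof cannot cite the $0$--$1$ versions as they stand. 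Two smaller points to tighten: (i) with $s_{k^*}\in[S,2S]$ the test $\widehat S_{k^*}\ge s_{k^*}$ need not succeed at $k^*$ itself (e.g.\ $\widehat S_{k^*}\approx S$ but $s_{k^*}=1.5S$), so acceptance should be argued to occur by level $k^*+O(1)$, which costs only a constant factor in time; and (ii) you should say what happens when no level accepts (e.g.\ $S$ so small that $t_k$ reaches $n$), where, like the paper's line~\ref{until-condition} fallback, one simply reads the whole list deterministically; this keeps the bound $\timecomplexity$ valid in that regime.
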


\begin{definition}\label{partition-def}\scrod
\begin{itemize}
\item
For each interval $I$ and a list of items $L$, define $A(I,L)$ to be
the number of items of $L$ in $I$.

\item
For $\delta$, and $\gamma$ in $(0,1)$, a {\it
$(\delta,\gamma)$-partition} for $[0,1]$ divides the interval
$[0,1]$ into intervals $I_1=[\pi_1, \pi_0], I_2=[\pi_2, \pi_1),
I_3=[\pi_3,\pi_2),\ldots ,I_k=[0, \pi_{k-1})$
 such that $\pi_0=1,
\pi_i=\pi_{i-1}(1-\delta)$ for $i=1,2,\ldots , k-1$, and $\pi_{k-1}$
is the first element $\pi_{k-1}\le {\gamma\over n^2}$.

\item
For a set $A$, $|A|$ is the number of elements in $A$. For a list
$L$ of items, $|L|$ is the number of items in $L$.
\end{itemize}
\end{definition}

%We develop an  approximation algorithm for the sum of numbers in
%$[0,1]$ using uniform random samplings.

A brief description of the idea is presented before the formal
algorithm and its proof. In order to get an
$(1+\epsilon)$-approximation for the sum of $n$ input numbers in the
list $L$, a parameter $\delta$ is selected with $1-{\epsilon\over
2}\le (1-\delta)^3$. For a $(\delta,\delta)$-partition $I_1\cup
I_2\ldots \cup I_k$ for $[0,1]$, Algorithm Approximate-Sum$(.)$
below gives the estimation for the number of items in each $I_j$ if
interval $I_j$ has a sufficient number of items. Otherwise, those
items in $I_j$ can be ignored without affecting much of the
approximation ratio. We have an adaptive way to do random samplings
in a series of phases. Let $s_t$ denote the number of random samples
in phase $t$. Phase $t+1$ doubles the number of random samples of
phase $t$ ($s_{t+1}=2s_t$). Let $L$ be the input list of items in
the range $[0,1]$. Let $d_j$ be the number items in $I_j$ from the
samples. For each phase, if an interval $I_j$ shows sufficient
number of items from the random samples, the number of items
$A(I_j,L)$ in $I_j$ can be sufficiently approximated by
$\hat{A}(I_j,L)=d_j\cdot {n\over s_t}$. Thus, $\hat{A}(I_j,L)\pi_j$
also gives an approximation for the sum of the sizes of items in
$I_j$. The sum $\appsum=\sum_{I_j}\hat{A}(I_j,L)\pi_j$ for those
intervals $I_j$ with large number of samples gives an approximation
for the total sum $\sum_{i=1}^na_i$ of the input list.  In the early
stages, $\appsum$ is much smaller than ${n\over s_t}$. Eventually,
$\appsum$ will surpass ${n\over s_t}$. This happens when $s_t$ is
more than ${n\over \sum_{i=1}^n a_i}$ and $\appsum$ is close to the
sum $\sum_{i=1}^n a_i$ of all items from the input list. This
indicates that the number of random samples is sufficient for
approximation algorithm. For those intervals with small number of
samples, their items only form a small fraction of the total sum.
This process is terminated when ignoring all those intervals with
none or small number of samples does not affect much of the accuracy
of approximation. The algorithm gives up the process of random
sampling when $s_t$ surpasses $n$, and switches to use a
deterministic way to access the input list, which happens when the
total sum of the sizes of input items is $O(1)$.

The computation time at each phase $i$ is $O(s_i)$. If phase $t$ is
the last phase, the total time is $O(s_t+{s_t\over 2}+{s_t\over
2^2}+\cdots)=O(s_t)$, which is close to $O({n\over \sum_{i=1}^n
a_i})$. Our final complexity upper bound is $\timecomplexity$, where
$\log\log n$ factor is caused by the probability amplification of
$O(\log n)$ stages and $O(\log n)$ intervals of the
$(\delta,\delta)$ partition in the randomized algorithm.

\vskip 10pt

 {\bf Algorithm Approximate-Sum$(\epsilon,  \alpha, n, L)$}

Input: a parameter, a small parameter $\epsilon\in (0,1)$, a failure
probability upper bound $\alpha$, an integer $n$, a list $L$ of $n$
items $a_1,\ldots , a_n$ in $[0,1]$.

Steps:

\begin{enumerate}[1.]

\item
Phase $0$:
\item\label{parameters-phase0}
\qquad Select $\delta={\epsilon\over 6}$  that satisfies
$1-{\epsilon\over 2}\le (1-\delta)^3$.

\item
\qquad Let $P$ be a $(\delta,\delta)$-partition $I_1\cup I_2\ldots
\cup I_k$ for $[0,1]$.

\item\label{xi0-setting}
\qquad Let $\xi_0$ be a parameter such that $8(k+1)(\log n)
g(\delta)^{(\xi_0\log \log n)/2}<\alpha$ for all large $n$.

\item\label{first-alpha-setting}
\qquad Let $z:=\xi_0\log \log n$.

\item\label{constant-setting-in-Approximate-Intervals}
\qquad Let parameters $c_1:={\delta^2\over 2(1+\delta)}$, and
$c_2:={12\xi_0\over (1-\delta) c_1}$.

\item
\qquad Let $s_0:=z$.

\item
End of Phase $0$.

\item
Phase $t$:

%\item
%Repeat

\item\label{loop-m-start}
\qquad Let $s_t:=2s_{t-1}$.

\item
\qquad Sample $s_t$ random items $a_{i_1},\ldots , a_{i_{s_t}}$ from
the input list $L$.

\item
\qquad Let $d_j:=|\{h: a_{i_h}\in I_j\ \ and \ 1\le h\le s_t\}|$ for
$j=1,2,\ldots , k$.

\item\label{I-j-loop}
\qquad For each $I_j$,

\item
\qquad\qquad if $d_j\ge z$,

\item\label{assign-hat-C}
\qquad\qquad then let $\hat{A}(I_j,L):={n\over s_t}d_j$ to
approximate $A(I_j,L)$.

\item\label{I-j-loop-end}
\qquad\qquad else let $\hat{A}(I_j,L):=0$.

\item\label{loop-m-end}
\qquad Let $\appsum:=\sum_{d_j\ge z}\hat{A}(I_j,L)\pi_j$ to
approximate $\sum_{i=1}^n a_n$.

\item\label{until-condition}
\qquad If $\appsum\le {2c_2n\log\log n\over s_t}$ and $s_t< n$ then
enter Phase $t+1$.

\item
\qquad else

\item
\qquad\qquad If $s_t<n$

\item
\qquad\qquad then let $\appsum:=\sum_{d_j\ge z}\hat{A}(I_j,L)\pi_j$
to approximate $\sum_{1\le i\le n}a_i$.

\item
\qquad\qquad else let $\appsum:=\sum_{i=1}^na_i$.

\item
\qquad\qquad Output $\appsum$ and terminate the algorithm.

\item
End of Phase $t$.
\end{enumerate}

{\bf End of Algorithm}

\vskip 10pt

Several lemmas will be proved in order to show the performance of
the algorithm.  Let $\delta,\xi_0, c_1$, and $c_2$ be parameters
defined as those in the Phase 0 of the algorithm
Approximate-Sum$(.)$.

\begin{lemma}\label{prelimary-lemma}\scrod
\begin{enumerate}
\item\label{k-bound}
For parameter $\delta$ in $(0,1)$, a $(\delta,\delta)$-partition for
$[0,1]$ has the number of intervals $k= O({\log n+\log {1\over
\delta}\over \delta})$.

\item\label{g(x)-bound}
$g(x)\le e^{-{x^2\over 4}}$ when $0<x\le {1\over 2}$.

\item\label{x0-bound} The parameter
$\xi_0$ can be set to be $O({\log {1\over \alpha\delta}\over \log
{1\over g(\delta)}})=O({\log {1\over \alpha\delta}\over \delta^2})$
for line~\ref{xi0-setting} in the algorithm Approximate-Sum(.).

\item\label{g(x)-decreasing}
Function $g(x)$ is decreasing and $g(x)<1$ for every $x>0$.
\end{enumerate}
\end{lemma}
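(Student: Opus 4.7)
The four items are largely independent calculations on the explicit definitions of $g_1, g_2, g$ and on the geometric recurrence defining the $(\delta,\delta)$-partition, so I would dispatch them in the natural order (4), (1), (2), (3), since parts (1) and (2) feed into the bound in part (3).

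For item (4), I note that $g_1(x)=e^{-x^2/2}$ is strictly decreasing on $(0,\infty)$ by direct differentiation. For $g_2$, the paper already shows that $f(x)=x-(1+x)\ln(1+x)$ satisfies $f(0)=0$ and $f'(x)=-\ln(1+x)<0$ on $(0,\infty)$; hence $g_2(x)=e^{f(x)}$ is strictly decreasing from $g_2(0)=1$. The pointwise maximum of two strictly decreasing functions bounded above by $1$ on $(0,\infty)$ is again strictly decreasing and bounded above by $1$. Item (1) is a direct unwrapping of the partition: $\pi_i=(1-\delta)^i$, and $k$ is the smallest index with $(1-\delta)^{k-1}\le \delta/n^2$. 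Taking logarithms and using $\ln(1/(1-\delta)) \ge \delta$ gives $k=O((\log n+\log(1/\delta))/\delta)$.

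Item (2) is the only step requiring genuine calculation. The inequality for $g_1$ is immediate since $-\theta^2/2 \le -\theta^2/4$. For $g_2$, I would introduce the auxiliary function $F(\theta)=\theta-(1+\theta)\ln(1+\theta)+\theta^2/4$ and check its sign by two differentiations. One computes $F(0)=0$, $F'(0)=0$, and $F''(\theta)=1/2 - 1/(1+\theta)$, which is strictly negative whenever $\theta<1$. On $(0,1)$ this forces $F'$ to decrease strictly from $0$, hence $F'<0$, and then $F$ decreases strictly from $0$, hence $F<0$. Exponentiating yields $g_2(\theta)\le e^{-\theta^2/4}$ throughout $(0,1/2]$, and combined with the $g_1$ estimate this proves the claim. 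The main subtlety is the restriction $\theta\le 1/2$, which is used only to keep $F''<0$ with room to spare.

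Finally, item (3) reduces to arithmetic on logarithms. The defining constraint $8(k+1)(\log n)\, g(\delta)^{\xi_0(\log\log n)/2}<\alpha$ rearranges, using $\log(1/g(\delta))>0$ from item (4), to
$$\xi_0 > \frac{2\log(8(k+1)(\log n)/\alpha)}{(\log\log n)\,\log(1/g(\delta))}.$$
Plugging in the bound on $k$ from item (1), the numerator of the right-hand side is $O(\log\log n+\log(1/\alpha)+\log(1/\delta))$, and dividing by $\log\log n$ collapses it to $O(\log(1/(\alpha\delta)))$ for all sufficiently large $n$. This gives the first form $\xi_0=O(\log(1/(\alpha\delta))/\log(1/g(\delta)))$; the second, $O(\log(1/(\alpha\delta))/\delta^2)$, then follows by substituting $\log(1/g(\delta))\ge \delta^2/4$ from item (2).
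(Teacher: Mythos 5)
Your proposal is correct, and for items (1), (3), and (4) it follows essentially the same route as the paper (the paper is terser: its proof of item (3) is literally ``it follows from statements 1 and 2,'' whereas you spell out the rearrangement of the constraint $8(k+1)(\log n)\,g(\delta)^{\xi_0(\log\log n)/2}<\alpha$, which is a useful addition). The one place you genuinely diverge is item (2): the paper bounds $\ln g_2(x)=x-(1+x)\ln(1+x)$ by invoking the Taylor inequality $\ln(1+x)\ge x-\tfrac{x^2}{2}$, getting $\ln g_2(x)\le -\tfrac{x^2}{2}(1-x)\le-\tfrac{x^2}{4}$ where the last step uses $x\le\tfrac12$; you instead set $F(\theta)=\theta-(1+\theta)\ln(1+\theta)+\tfrac{\theta^2}{4}$ and run a second-derivative argument, noting $F(0)=F'(0)=0$ and $F''(\theta)=\tfrac12-\tfrac{1}{1+\theta}<0$ for $\theta<1$. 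Both are elementary one-line calculus arguments; yours has the minor advantage of establishing the bound on all of $(0,1)$ rather than only $(0,\tfrac12]$ (so, as you observe, the hypothesis $\theta\le\tfrac12$ is not actually needed in your version), while the paper's is slightly shorter because it reuses a standard Taylor estimate. Everything else checks out: your monotonicity argument for $g=\max(g_1,g_2)$, the logarithmic solve for $k$, and the final substitution $\log(1/g(\delta))\ge\delta^2/4$ to pass from the first form of the $\xi_0$ bound to the second are all sound.
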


\begin{proof} Statement~\ref{k-bound}:
The number of intervals $k$ is the least integer with
$(1-\delta)^k\le {\delta\over n^2}$. We have $k= O({\log n+\log
{1\over \delta}\over \delta})$.

Statement~\ref{g(x)-bound}: By definition
$g(x)=\max(g_1(x),g_2(x))$, where $g_1(x)=e^{-{1\over 2}x^2}$ and
$g_2(x)={e^{x}\over (1+x)^{(1+x)}}$. We just need to prove that
$g_2(x)\le e^{-{x^2\over 4}}$ when $x\le {1\over 2}$. By Taylor
theorem $\ln (1+x)\ge x-{x^2\over 2}$. Assume $0<x\le {1\over 2}$.
We have
\begin{eqnarray*}
\ln g_2(x) &=&x-(1+x)\ln (1+x)\\
&\le&x-(1+x)(x-{x^2\over 2})\\
&=&-{x^2\over 2}(1-x)\\
&\le&-{x^2\over 4}.
\end{eqnarray*}

Statement~\ref{x0-bound}: We need to set up $\xi_0$ to satisfy the
condition in line line~\ref{xi0-setting} in the algorithm. It
follows from statement~\ref{k-bound} and statement~\ref{g(x)-bound}.

Statement~\ref{g(x)-decreasing}: It follows from the fact that
$g_2(x)$ is decreasing, and less than $1$ for each $x>0$. We already
 explained in section~\ref{chernoff-sec}.
\end{proof}

We use the uniform random sampling to approximate the  number of
items in each interval $I_j$ in the $(\delta,\delta)$-partition. Due
to the technical reason, we estimate the failure probability instead
of the success probability.

\begin{lemma}\label{lemma.1}
Let $Q_1$ be the probability that the following statement is false
at the end of each phase:

(i) For each interval $I_j$ with $d_j\ge z$, $(1-\delta)A(I_j,L)\le
\hat{A}(I_j,L)\le (1+\delta)A(I_j,L)$.

Then for each phase in the algorithm, $Q_1\le (k+1)\cdot
g(\delta)^{z\over 2}$.
\end{lemma}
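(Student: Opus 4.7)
The plan is to apply the Chernoff bounds (Theorems~\ref{chernoff3-theorem} and \ref{ourchernoff2-theorem}) interval-by-interval and then union bound over the $k$ intervals of the $(\delta,\delta)$-partition. Let $p_j := A(I_j,L)/n$, so each of the $s_t$ uniform samples lies in $I_j$ independently with probability $p_j$, and $d_j$ is a sum of $s_t$ i.i.d.\ Bernoulli$(p_j)$ variables with expectation $\mu_j := p_j s_t = A(I_j,L)\,s_t/n$. Because the algorithm sets $\hat A(I_j,L) = (n/s_t)\,d_j$ whenever $d_j\ge z$, statement (i) fails at $I_j$ exactly on the event $\{d_j\ge z\}\cap\{d_j\notin[(1-\delta)\mu_j,(1+\delta)\mu_j]\}$, which I split into a lower-tail part and an upper-tail part.

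For the lower-tail contribution $\{d_j\ge z,\ d_j<(1-\delta)\mu_j\}$, the first inclusion combined with the second forces $\mu_j\ge z/(1-\delta)$, and Theorem~\ref{chernoff3-theorem} then bounds its probability by $g_1(\delta)^{\mu_j}\le g_1(\delta)^{z}\le g(\delta)^{z/2}$. For the upper-tail contribution $\{d_j\ge z,\ d_j>(1+\delta)\mu_j\}$ I distinguish two subcases according to the size of $\mu_j$. If $\mu_j\ge z/(1+\delta)$, Theorem~\ref{ourchernoff2-theorem} (applied with the true success probability) gives the bound $g_2(\delta)^{\mu_j}\le g_2(\delta)^{z/(1+\delta)}\le g_2(\delta)^{z/2}\le g(\delta)^{z/2}$, using $\delta<1$. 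If $\mu_j<z/(1+\delta)$, then the event $d_j\ge z$ already implies $d_j>(1+\delta)\mu_j$, so it suffices to bound $\Pr[d_j\ge z]$; here I use that the binomial upper tail is monotone nondecreasing in $p_j$ and invoke Theorem~\ref{ourchernoff2-theorem} with the threshold value $p=z/((1+\delta)s_t)$, which again yields at most $g(\delta)^{z/2}$.

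Summing the two per-interval contributions and then union-bounding over $j=1,\dots,k$ produces the claimed estimate $Q_1\le (k+1)\,g(\delta)^{z/2}$. The principal obstacle is the regime in which $\mu_j$ is much smaller than $z$ yet the rare event $d_j\ge z$ must still be controlled: the standard multiplicative Chernoff inequality indexed by $(1+\delta)\mu_j$ has no direct relationship to the threshold $z$ in this regime, and the decisive step is the stochastic-dominance move which lets me invoke Theorem~\ref{ourchernoff2-theorem} with success-probability parameter equal to the case-boundary value, so that the effective deviation parameter becomes exactly $\delta$ and the final exponent $z/(1+\delta)\ge z/2$ falls out automatically.
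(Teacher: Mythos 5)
Your argument is correct and follows essentially the same route as the paper's: per-interval Chernoff estimates from Theorems~\ref{chernoff3-theorem} and~\ref{ourchernoff2-theorem}, a separate treatment of the regime where the expected count $\mu_j$ is small compared to $z$ by invoking the ``probability at most $p$'' form of Theorem~\ref{ourchernoff2-theorem} at a threshold probability, and a union bound over the $k$ intervals; your execution of the small-$\mu_j$ step (splitting by tail first, noting the lower-tail event forces $\mu_j> z/(1-\delta)$, and using $\theta=\delta$ at the boundary probability $z/((1+\delta)s_t)$) is in fact cleaner than the paper's Case~1, which uses $\theta=1$ with $p'=z/(2s_t)$ and whose displayed chain $g_2(1)^{p_js_t}\le g_2(1)^{p's_t}$ runs in the wrong direction when $p_j\le p'$. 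One bookkeeping remark: your two per-interval tail contributions of $g(\delta)^{z/2}$ each sum to $2k\,g(\delta)^{z/2}$ over the $k$ intervals, not the stated $(k+1)\,g(\delta)^{z/2}$; this is harmless downstream (the choice of $\xi_0$ in line~\ref{xi0-setting} carries a factor-$8$ cushion), and the paper's own accounting---which charges the small-$p_j$ case once globally rather than per interval and merges the two tails of its Case~2 into a single $g(\delta)^{z/2}$---is no tighter. A last cosmetic point: since $d_j$ is integer-valued and $z$ need not be, converting $\Pr[d_j\ge z]$ into a strict-inequality event for Theorem~\ref{ourchernoff2-theorem} costs only an additive $1$ in the exponent, which is absorbed for large $n$ because $z=\xi_0\log\log n\to\infty$.
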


\begin{proof}
 An element of $L$ in $I_j$ is sampled (by an uniform sampling)
with probability $p_j={A(I_j,L)\over n}$. Let $p'={z\over 2s_t}$.
 For each interval $I_j$ with $d_j\ge z$, we discuss two
cases.

\begin{itemize}
\item
Case 1. $p'\ge p_j$.

In this case, $d_j\ge z\ge 2p' s_t\ge 2p_j s_t$. Note that $d_j$ is
the number of elements in interval $I_j$ among $s_t$ random samples
$a_{i_1},\ldots , a_{i_{s_t}}$ from $L$. By
Theorem~\ref{ourchernoff2-theorem} (with $\theta=1$), with
probability at most $P_1=g_2(1)^{p_jm_t}\le g_2(1)^{p's_t}\le
g_2(1)^{z/2}\le g(1)^{z/2}$, there are at least $2p_j s_t$ samples
are from interval $I_j$. Thus, the probability is at most $P_1$ for
the condition of Case 1 to be true.

\item
Case 2. $p'< p_j$.

 By
Theorem~\ref{ourchernoff2-theorem}, we have $\prob[d_j>
(1+\delta)p_jm_t]\le g_2(\delta)^{p_jm_t}\le g_2(\delta)^{p's_t}\le
g_2(\delta)^{z\over 2}\le g(\delta)^{z\over 2}$.

%If $d_j\le (1-\theta)p_jm$,  we have $2p'm=z\le d_j\le
%(1-\theta)p_jm$, which implies $p'\le {(1-\theta)p_j\over 2}$.

By Theorem~\ref{chernoff3-theorem}, we have $\prob[d_j\le
(1-\delta)p_jm_t]\le g_1(\delta)^ {p_jm_t}\le
g_1(\delta)^{p's_t}=g_1(\delta)^{z\over 2}\le g(\delta)^{z\over 2}$.

For each interval $I_j$ with $d_j\ge z$ and $(1-\delta)p_jm_t\le
d_j\le (1+\delta)p_jm_t$, we have $(1-\delta)A(I_j,L)\le
\hat{A}(I_j,L)\le (1+\delta)A(I_j,L)$ by line~\ref{assign-hat-C} in
Approximate-Sum(.).

There are $k$ intervals $I_1,\ldots , I_k$. Therefore, with
probability at most $P_2=k\cdot g(\delta)^{z\over 2}$,
 the following is false: For each interval $I_j$
with $d_j\ge z$, $(1-\delta)A(I_j,L)\le \hat{A}(I_j,L)\le
(1+\delta)A(I_j,L)$.
\end{itemize}

 By the
analysis of Case 1 and Case 2, we have  $Q_1\le P_1+P_2\le
(k+1)\cdot g(\delta)^{z\over 2}$ (see
statement~\ref{g(x)-decreasing} of Lemma~\ref{prelimary-lemma}).
Thus, the lemma has been proven.
\end{proof}

%{\bf Claim \ref{app-sum-lemma}.2.}
\begin{lemma}\label{lemma.2}
Assume that $s_t\ge{ c_2n\log\log n\over \sum_{i=1}^n a_i}$. Then
right after executing Phase $t$ in Approximate-Sum$(.)$, with
probability at most $Q_2=2kg(\delta)^{\xi_0\log\log n}$, the
following statement is false:

(ii) For each interval $I_j$ with $A(I_j, L)\ge c_1\sum_{i=1}^n
a_i$, A). $(1-\delta)A(I_j,L)\le \hat{A}(I_j,L)\le
(1+\delta)A(I_j,L)$; and B). $d_j\ge z$.
\end{lemma}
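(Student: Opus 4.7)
The plan is to show that the hypothesis $s_t \ge \frac{c_2 n\log\log n}{\sum_i a_i}$ forces the expected sample count in every ``heavy'' interval to be much larger than the cutoff $z$, so that a single two-sided Chernoff estimate controls both (ii)A and (ii)B simultaneously for that interval; a union bound over the $k$ intervals then gives $Q_2$.

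First I would convert the hypotheses into a lower bound on $p_j s_t$, where $p_j = A(I_j,L)/n$ is the probability that a uniform sample lands in $I_j$. For a heavy interval with $A(I_j,L)\ge c_1\sum_{i=1}^n a_i$ we get $p_j \ge c_1 \sum_i a_i / n$, so
\[
p_j s_t \;\ge\; c_1 \cdot \frac{\sum_i a_i}{n} \cdot \frac{c_2 n \log\log n}{\sum_i a_i} \;=\; c_1 c_2 \log\log n \;=\; \frac{12\,\xi_0 \log\log n}{1-\delta} \;\ge\; \frac{12z}{1-\delta}.
\]
This is the key quantitative estimate and is exactly why $c_2$ was chosen with the factor $12/((1-\delta)c_1)$.

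Next I would apply the Chernoff bounds to the count $d_j$ (a sum of $s_t$ independent Bernoullis with parameter $p_j$) with deviation parameter $\theta = \delta$. Theorem~\ref{chernoff3-theorem} gives $\Pr[d_j < (1-\delta) p_j s_t] \le g_1(\delta)^{p_j s_t}$ and Theorem~\ref{ourchernoff2-theorem} gives $\Pr[d_j > (1+\delta) p_j s_t] \le g_2(\delta)^{p_j s_t}$. Combining and using $g(\delta) = \max(g_1(\delta),g_2(\delta)) < 1$ (statement~\ref{g(x)-decreasing} of Lemma~\ref{prelimary-lemma}) together with $p_j s_t \ge z$ yields
\[
\Pr\bigl[d_j \notin [(1-\delta)p_j s_t,\,(1+\delta)p_j s_t]\bigr] \;\le\; 2\,g(\delta)^{p_j s_t} \;\le\; 2\,g(\delta)^{z}.
\]
On the complementary ``good'' event, line~\ref{assign-hat-C} gives (ii)A directly, since $\hat A(I_j,L) = \frac{n}{s_t}d_j$ lies between $(1-\delta)A(I_j,L)$ and $(1+\delta)A(I_j,L)$ iff $d_j \in [(1-\delta)p_j s_t, (1+\delta)p_j s_t]$. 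For (ii)B, observe that on the same good event $d_j \ge (1-\delta) p_j s_t \ge (1-\delta)\cdot\frac{12z}{1-\delta} = 12z \ge z$, which is exactly the reason the gap factor $12$ was baked into $c_2$: it makes the lower Chernoff bound automatically clear the threshold $z$. Hence the good event implies both (ii)A and (ii)B.

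Finally a union bound over the $k$ intervals of the $(\delta,\delta)$-partition gives total failure probability at most $2k\,g(\delta)^z = 2k\,g(\delta)^{\xi_0 \log\log n} = Q_2$, as required. I do not expect any serious obstacle; the only thing to watch is the bookkeeping that ensures $p_j s_t$ is large enough both to make the Chernoff deviation parameter $\delta$ and to make the threshold $z$ fall below the lower Chernoff tail, which is exactly what the constants $c_1$ and $c_2$ are designed to guarantee.
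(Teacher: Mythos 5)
Your proposal is correct and follows essentially the same route as the paper's own proof: lower-bound $p_j s_t$ by $c_1 c_2 \log\log n$ using the hypothesis on $s_t$ and the heaviness of $I_j$, apply the two one-sided Chernoff bounds (Theorems~\ref{chernoff3-theorem} and~\ref{ourchernoff2-theorem}) with $\theta=\delta$, observe that the lower tail event already forces $d_j\ge z$ because $(1-\delta)c_1c_2\log\log n\ge \xi_0\log\log n=z$, and union-bound over the $k$ intervals to get $Q_2=2kg(\delta)^{\xi_0\log\log n}$. No gaps.
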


\begin{proof}
Assume that $s_t\ge{ c_2n\log\log n\over \sum_{i=1}^n a_i}$.
Consider each interval $I_j$ with $A(I_j, L)\ge c_1\sum_{i=1}^n
a_i$. We have that $p_j={A(I_j,L)\over n}\ge {c_1\sum_{i=1}^n
a_i\over n}$. An element of $L$ in $I_j$ is sampled with probability
$p_j$.  By Theorem~\ref{ourchernoff2-theorem},
Theorem~\ref{chernoff3-theorem}, and Phase 0 of Approximate-Sum(.),
we have
\begin{eqnarray}
\prob[d_j<(1-\delta)p_jm_t]\le g_1(\delta)^{p_jm_t}\le
g_1(\delta)^{c_1c_2\log\log n}\le g(\delta)^{\xi_0\log\log n}.\\
\prob[d_j>(1+\delta)p_jm_t]\le g_2(\delta)^{p_jm_t}\le
g_2(\delta)^{c_1c_2\log\log n}\le g(\delta)^{\xi_0\log\log n}.
\end{eqnarray}

Therefore, with probability at most $2kg(\delta)^{\xi_0\log\log n}$,
the following statement is false:

For each interval $I_j$ with $A(I_j, L)\ge c_1\sum_{i=1}^n a_i$,
$(1-\delta)A(I_j,L)\le \hat{A}(I_j,L)\le (1+\delta)A(I_j,L)$.

If $d_j\ge (1-\delta)p_j s_t$, then we have
\begin{eqnarray*}
d_j&\ge&  (1-\delta){A(I_j, L)\over n} s_t\\
&\ge& (1-\delta){(c_1\sum_{i=1}^n a_i)\over n}\cdot {
c_2n\log\log n\over \sum_{i=1}^n a_i}\\
&=&(1-\delta)c_1c_2\log\log n\\
&\ge& \xi_0\log\log n= z. \ \ \ \ \ \mbox{(by\ Phase\ 0\ of\
Approximate-Sum(.))}
\end{eqnarray*}
\end{proof}

%This brings a close approximation to $d_j\ge z$ with high probability.

%{\bf Claim \ref{app-sum-lemma}.3.}
\begin{lemma}\label{lemma.3}
The total sum of the sizes of items in those $I_j$s with $A(I_j, L)<
c_1\sum_{i=1}^n a_i$ is at most ${ \delta\over 2}(\sum_{i=1}^n
a_i)+{\delta\over n}$.
\end{lemma}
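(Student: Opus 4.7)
The plan is to split the sum by interval and exploit the geometric structure of the $(\delta,\delta)$-partition, together with the smallness of the last interval.

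First I would observe that every item lying in $I_j$ has size at most the upper endpoint $\pi_{j-1}$ of $I_j$, so the total size of items in $I_j$ is bounded by $A(I_j,L)\cdot \pi_{j-1}$, where $\pi_{j-1}=(1-\delta)^{j-1}$ for $j\le k-1$ and $\pi_{k-1}\le \delta/n^2$.

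Next I would separate the intervals into the ``bulk'' intervals $I_1,\ldots,I_{k-1}$ and the ``tail'' interval $I_k$. For a bulk interval $I_j$ with $A(I_j,L)<c_1\sum_{i=1}^n a_i$, its contribution is at most $c_1(\sum_{i=1}^n a_i)\pi_{j-1}$. Summing these contributions and using the geometric bound $\sum_{j=1}^{k-1}\pi_{j-1}\le \sum_{j=0}^{\infty}(1-\delta)^{j}=1/\delta$, I get a total contribution from the bulk of at most $\frac{c_1}{\delta}\sum_{i=1}^n a_i$. Plugging in $c_1=\frac{\delta^2}{2(1+\delta)}$ from line~\ref{constant-setting-in-Approximate-Intervals} of Approximate-Sum$(\cdot)$ yields $\frac{c_1}{\delta}=\frac{\delta}{2(1+\delta)}\le \frac{\delta}{2}$, so the bulk contributes at most $\frac{\delta}{2}\sum_{i=1}^n a_i$.

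For the tail interval $I_k=[0,\pi_{k-1})$, every item has size at most $\pi_{k-1}\le \delta/n^2$ by Definition~\ref{partition-def}, and there are at most $n$ items total in $L$, so the contribution of $I_k$ (whether or not the count condition $A(I_k,L)<c_1\sum_{i=1}^n a_i$ holds) is at most $n\cdot \delta/n^2=\delta/n$. Adding this to the bulk bound gives the claimed $\frac{\delta}{2}(\sum_{i=1}^n a_i)+\frac{\delta}{n}$.

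There is no real obstacle here: the argument is a straightforward geometric-series calculation, with the only mild subtlety being that one must handle the final interval $I_k$ separately because its items can be numerous but are collectively negligible, and one must verify that the constant $c_1$ defined by the algorithm is small enough to absorb the factor $1/\delta$ arising from the geometric sum.
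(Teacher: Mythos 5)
Your proof is correct and follows essentially the same route as the paper's: bound each small interval $I_j$ ($j\le k-1$) by $c_1(\sum_{i=1}^n a_i)\pi_{j-1}$, sum the geometric series to get the factor $c_1/\delta\le\delta/2$, and bound the last interval $I_k$ separately by $n\cdot\delta/n^2=\delta/n$. No substantive differences.
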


\begin{proof} By Definition~\ref{partition-def}, we have
$\pi_j=(1-\delta)^j$ for $j=1,\ldots ,k-1$. We have that
\begin{itemize}
\item
the sum of sizes of items in $I_k$ is at most $n\cdot {\delta\over
n^2}={\delta\over n}$,
\item
for each interval $I_j$ with $A(I_j, L)< c_1\sum_{i=1}^n a_i$, the
sum of sizes of items in $I_j$ is at most $(c_1\sum_{i=1}^n
a_i)\pi_{j-1}\le (c_1\sum_{i=1}^n a_i)(1-\delta)^{j-1}$ for $j\in
[1,k-1]$.
\end{itemize}
The total sum of the sizes of items in those $I_j$s with $A(I_j, L)<
c_1\sum_{i=1}^n a_i$  is at most
\begin{eqnarray*}
\sum_{j=1}^{k-1}(c_1\sum_{i=1}^n a_i)\pi_{j-1})+\sum_{a_i\in I_k}a_k
&\le& \sum_{j=1}^{k-1}(c_1\sum_{i=1}^n a_i)(1-\delta)^{j-1})+n\cdot
{r\over
n^2}\\
&\le& {c_1\over \delta}(\sum_{i=1}^n a_i)+{\delta\over n}\\
&\le& { \delta\over 2}(\sum_{i=1}^n a_i)+{\delta\over n}.   \ \ \ \
\ \mbox{(by\ Phase\ 0\ of\ Approximate-Sum(.))}
\end{eqnarray*}
\end{proof}

%{\bf Claim \ref{app-sum-lemma}.4.}
\begin{lemma}\label{lemma.4}
Assume that at the end of phase $t$, for each $I_j$ with
$\hat{A}(I_j,L)>0$, ${A(I_j,L) (1-\delta)}\le \hat{A}(I_j,L)\le
{A(I_j,L)(1+\delta)}$; and $d_j\ge z$ if $A(I_j,L)\ge
c_1\sum_{i=1}^na_i$. Then $(1-{\epsilon\over
2})(\sum_{i=1}^na_i-{4\delta\over n})\le \appsum\le
(1+\delta)(\sum_{i=1}^na_i)$ at the end of phase $t$.
\end{lemma}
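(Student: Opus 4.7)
The plan is to bound $\appsum$ from above and below separately, using the observation that every item $a_i\in I_j$ satisfies $\pi_j\le a_i\le \pi_{j-1}=\pi_j/(1-\delta)$, so the true contribution of interval $I_j$ to $\sum_{i=1}^n a_i$ lies in $[A(I_j,L)\pi_j,\; A(I_j,L)\pi_j/(1-\delta)]$. In other words, $A(I_j,L)\pi_j$ approximates $\sum_{a_i\in I_j}a_i$ up to a factor of $(1-\delta)$, and the algorithm's estimate $\hat{A}(I_j,L)\pi_j$ is obtained by composing this with the sampling error factor that the hypothesis already provides.

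For the upper bound, I restrict the sum to those $I_j$ with $d_j\ge z$ (the only ones contributing to $\appsum$). Using the hypothesis $\hat{A}(I_j,L)\le(1+\delta)A(I_j,L)$ together with $A(I_j,L)\pi_j\le\sum_{a_i\in I_j}a_i$, each term satisfies $\hat{A}(I_j,L)\pi_j\le(1+\delta)\sum_{a_i\in I_j}a_i$. Summing and noting that the discarded intervals only reduce the right-hand side yields $\appsum\le(1+\delta)\sum_{i=1}^n a_i$.

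For the lower bound, the same two inequalities in the reverse direction give $\hat{A}(I_j,L)\pi_j\ge(1-\delta)^2\sum_{a_i\in I_j}a_i$ for every interval with $d_j\ge z$. The key observation is that the intervals missing from $\appsum$ are exactly those with $d_j<z$, and by the contrapositive of the lemma's second hypothesis these all satisfy $A(I_j,L)<c_1\sum_{i=1}^n a_i$. Hence Lemma~\ref{lemma.3} caps their total mass by ${\delta\over2}\sum_{i=1}^n a_i+{\delta\over n}$, so
\[
\appsum\ \ge\ (1-\delta)^2\left[\left(1-{\delta\over2}\right)\sum_{i=1}^n a_i-{\delta\over n}\right].
\]

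The final step is to rewrite this as $(1-{\epsilon\over2})(\sum_{i=1}^n a_i-{4\delta\over n})$. For the main term, $(1-\delta)^2(1-\delta/2)\ge(1-\delta)^3\ge1-{\epsilon\over2}$ by the choice of $\delta$ in Phase~$0$. For the additive error, $(1-\delta)^2{\delta\over n}\le{4\delta\over n}(1-{\epsilon\over2})$ is immediate since $(1-\delta)^2\le1$ and $4(1-{\epsilon\over2})>1$ for $\epsilon\in(0,1)$. No calculation is individually difficult; the only subtle point is the bookkeeping that the intervals omitted from $\appsum$ are precisely the ones covered by Lemma~\ref{lemma.3}, which is what allows the geometric partition loss to be swallowed into the overall $(1+\epsilon)$ approximation ratio.
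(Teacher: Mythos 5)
Your proof is correct and follows essentially the same route as the paper's: bound $\hat{A}(I_j,L)\pi_j$ against $\sum_{a_i\in I_j}a_i$ via the geometric partition, and use the contrapositive of the second hypothesis together with Lemma~\ref{lemma.3} to absorb the intervals with $d_j<z$. The only (shared) imprecision is the last interval $I_k=[0,\pi_{k-1})$, whose items need not satisfy $a_i\ge \pi_k$ --- the paper handles this by dropping $j=k$ from the lower-bound sum and subtracting its mass ($\le \delta/n$) separately, while you fold it into the general case; either way the $4\delta/n$ slack comfortably covers it.
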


\begin{proof} By the assumption of the lemma, we have $\appsum=\sum_{d_j\ge z}\hat{A}(I_j,L)\pi_j\le
(1+\delta)\sum_{i=1}^n a_i$. For each interval $I_j$ with $j\not=k$,
we have $A(I_j,L)\pi_j\ge (1-\delta)\sum_{a_i\in I_j}a_i$ by the
definition of $(\delta,\delta)$-partition.  Thus,
\begin{eqnarray}
A(I_j,L)\pi_j\ge (1-\delta)\sum_{a_i\in I_j}a_i\ \ \ \mbox{for \ \
$j\not=k$.}\label{lower-bound-one-interval}
\end{eqnarray}

By the condition of this lemma and Lemma~\ref{lemma.3},we have
\begin{eqnarray}
\sum_{d_j<z}\sum_{a_i\in I_j} a_i\le { \delta\over 2}(\sum_{i=1}^n
a_i)+{\delta\over n}\ \ \ \ \ \label{lemma3-ineqn}
\end{eqnarray}

 We have the following inequalities:
\begin{eqnarray*}
\appsum&=&\sum_{d_j\ge z}\hat{A}(I_j,L)\pi_j\ \ \ \mbox{(by\ line~\ref{until-condition}\ in\ Approximate-Sum(.))}\\
&\ge&(1-\delta)\sum_{d_j\ge z}A(I_j,L)\pi_j\\
&\ge&(1-\delta)\sum_{d_j\ge z, j\not=k}A(I_j,L)\pi_j\\
&\ge&(1-\delta)^2\sum_{d_j\ge z,j\not=k}\left(\sum_{a_i\in I_j}a_i\right)\ \ \ \ \mbox{(by\ inequality~(\ref{lower-bound-one-interval}))}\\
&\ge&(1-\delta)^2(\sum_{i=1}^na_i-\sum_{d_j<z}\sum_{a_i\in I_j} a_i-\sum_{a_i\in I_k}a_i)\\
&\ge&(1-\delta)^2(\sum_{i=1}^na_i-({ \delta\over 2}(\sum_{i=1}^n a_i)+{\delta\over n})-n\cdot {\delta\over n^2})\ \ \ \mbox{(by\ inequality\ (\ref{lemma3-ineqn}))}\\
&\ge&(1-\delta)^3(\sum_{i=1}^na_i-{4\delta\over
n})\\
&\ge&(1-{\epsilon\over 2})(\sum_{i=1}^na_i-{4\delta\over n}). \ \ \
\mbox{(By\ line~\ref{parameters-phase0}\ in\ Phase\ 0\ of\ the\
algorithm)}
\end{eqnarray*}
\end{proof}

% Therefore, both
%statements~\ref{item1-app-sum-lemma} and \ref{item2-app-sum-lemma}
%are proven. Statement~\ref{item3-app-sum-lemma} follows from
%statements~\ref{item1-app-sum-lemma} and \ref{item2-app-sum-lemma}.

%{\bf Claim \ref{app-sum-lemma}.5.}
\begin{lemma}\label{lemma.5}
With probability at most $Q_5=(k+1)\cdot (\log n) g(\delta)^{z\over
2}$,  at least one of the following statements is false:
\begin{enumerate}[A.]
\item\label{clm5-a}
%With probability at most $(k+1)\cdot (\log n) g(\theta)^{z\over 2}$,
For each phase $t$ with $s_t<{c_2n\log\log n\over \sum_{i=1}^n
a_i}$,  the condition $\appsum\le {2c_2n\log\log n\over s_t}$ in
line~\ref{until-condition} of the algorithm is true.
\item\label{clm5-b}
If $\sum_{i=1}^na_i\ge 4$, then the algorithm stops some phase $t$
with $s_t\le {16c_2n\log\log n\over \sum_{i=1}^n a_i}$.

\item\label{clm5-c}
If $\sum_{i=1}^na_i< 4$, then it stops  at a phase $t$ in which the
condition $s_t\ge n$ first becomes true, and outputs
$\appsum=\sum_{i=1}^n a_i$.
\end{enumerate}
\end{lemma}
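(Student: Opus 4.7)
The plan is to first produce a high-probability good event on which the estimation guarantees of Lemma~\ref{lemma.1} and Lemma~\ref{lemma.2} hold at every phase, and then prove all three claims deterministically on that event by tracing through the termination condition in line~\ref{until-condition}. Since $s_0=z$ and $s_t=2s_{t-1}$, at most $O(\log n)$ phases occur before $s_t\ge n$. Union-bounding Lemma~\ref{lemma.1} across all of them, statement~(i) holds at every phase outside an event of probability at most $(k+1)(\log n)g(\delta)^{z/2}$; the Chernoff tails driving Lemma~\ref{lemma.2} are of the same form with exponent $\ge z$, so their contribution $O(k(\log n)g(\delta)^z)$ is absorbed into the same bound since $g(\delta)^z\le g(\delta)^{z/2}$. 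Call this good event $\mathcal{E}$. On $\mathcal{E}$ we have $\hat{A}(I_j,L)\le(1+\delta)A(I_j,L)$ whenever $d_j\ge z$, hence $\appsum\le(1+\delta)\sum_{i=1}^n a_i$ at every phase. Claim~\ref{clm5-a} then follows immediately: if $s_t<c_2n\log\log n/\sum_i a_i$, then $\appsum\le(1+\delta)\sum_i a_i<(1+\delta)c_2n\log\log n/s_t<2c_2n\log\log n/s_t$, so the test in line~\ref{until-condition} advances to phase $t+1$.

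For claim~\ref{clm5-b}, suppose $\sum_i a_i\ge 4$ and let $t^\star$ be the first phase with $s_{t^\star}\ge 8c_2n\log\log n/\sum_i a_i$. Doubling forces $s_{t^\star}<16c_2n\log\log n/\sum_i a_i$. At $t^\star$ the hypothesis of Lemma~\ref{lemma.2} is satisfied, so on $\mathcal{E}$ every heavy interval (with $A(I_j,L)\ge c_1\sum_i a_i$) enjoys both an accurate estimate and $d_j\ge z$. This verifies the premises of Lemma~\ref{lemma.4}, which yields $\appsum\ge(1-\epsilon/2)(\sum_i a_i-4\delta/n)\ge(3/8)\sum_i a_i$ when $\sum_i a_i\ge 4$ and $n$ is large. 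Meanwhile $2c_2n\log\log n/s_{t^\star}\le\sum_i a_i/4<\appsum$, so the termination test fires at phase $t^\star$ (or earlier), giving the claimed upper bound on $s_t$ when the algorithm stops.

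For claim~\ref{clm5-c}, suppose $\sum_i a_i<4$. Then on $\mathcal{E}$ we have $\appsum\le(1+\delta)\sum_i a_i<4(1+\delta)<8$, whereas $2c_2n\log\log n/s_t\ge 2c_2\log\log n>8$ for every phase with $s_t\le n$ and $n$ sufficiently large. The $\appsum$ test therefore never triggers termination, and the algorithm exits only through the $s_t\ge n$ branch, where the corresponding line outputs $\appsum=\sum_{i=1}^n a_i$ exactly. I expect the main obstacle to be the chain of constants in claim~\ref{clm5-b}: aligning the prefactor $8$ in the threshold for $t^\star$ with $c_2=12\xi_0/((1-\delta)c_1)$ so that Lemma~\ref{lemma.4}'s lower bound strictly dominates $2c_2n\log\log n/s_{t^\star}$ after absorbing the $4\delta/n$ additive slack; checking that the combined failures from Lemma~\ref{lemma.1} and Lemma~\ref{lemma.2} still fit inside $(k+1)(\log n)g(\delta)^{z/2}$ up to constants is a minor but tedious bookkeeping step.
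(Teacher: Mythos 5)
Your proof follows essentially the same route as the paper's: union-bound Lemma~\ref{lemma.1} over the at most $\log n$ phases, use the resulting bound $\appsum\le(1+\delta)\sum_{i=1}^n a_i$ for claim~A, pick the first phase with $s_t\ge {8c_2n\log\log n\over \sum_{i=1}^n a_i}$ (hence $s_t<{16c_2n\log\log n\over \sum_{i=1}^n a_i}$ by doubling) and invoke Lemma~\ref{lemma.4}'s lower bound to force termination for claim~B, and derive claim~C from the test never firing when $\sum_{i=1}^n a_i<4$. The one blemish is your assertion that Lemma~\ref{lemma.2}'s failure probability is ``absorbed'' into $Q_5$ --- strictly it adds an extra $2k(\log n)g(\delta)^{z}$ term --- but the paper makes the same elision (it silently needs the $d_j\ge z$ premise of Lemma~\ref{lemma.4} in claim~B and only charges $Q_2$ in Lemma~\ref{app-sum-lemma}), so this is at worst a constant-factor bookkeeping discrepancy shared with the original.
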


\begin{proof}
By Lemma~\ref{lemma.1}, with probability at most $(k+1)\cdot
g(\delta)^{z\over 2}$, the statement i of Lemma \ref{lemma.1} is
false for a fixed $m$. The number of phases is at most $\log n$
since $s_t$ is double at each phase. With probability $(k+1)\cdot
(\log n)\cdot g(\delta)^{z\over 2}$, the statement i of Lemma
\ref{lemma.1} is false for each phase $t$ with $s_t\le n$. Assume
that statement i of Lemma \ref{lemma.1} is true for every phase $t$
executed by the algorithm Approximate-Sum(.).

Statement~\ref{clm5-a}.  Assume that $s_t<{c_2n\log\log n\over
\sum_{i=1}^n a_i}$. We have ${n\over s_t}>{n\over {c_2 n\log\log
n\over \sum_{i=1}^n a_i}}={\sum_{i=1}^n a_i\over c_2\log\log n}$.
Therefore, $\sum_{i=1}^n a_i<({n\over s_t})c_2\log\log
n={c_2n\log\log n\over s_t}$.

Since statement i of Lemma \ref{lemma.1} is true,  the condition of
Lemma~~\ref{lemma.4} is satisfied.  By Lemma~\ref{lemma.4},
$\appsum\le (1+\delta)\sum_{i=1}^n a_i$.
 Since $(1+\delta)<2$ (by line~\ref{constant-setting-in-Approximate-Intervals} in Approximate-Sum(.)), we have
\begin{eqnarray*}
 \appsum\le (1+\delta)\sum_{i=1}^n a_i\le 2\sum_{i=1}^n a_i
 < 2\cdot {c_2n\log\log n\over s_t}= {2c_2n\log\log n\over s_t}.
\end{eqnarray*}
Statement~\ref{clm5-b}. The variable $s_t$ is doubled in each new
phase.

Assume that the algorithm enters phase $t$ with ${8c_2n\log\log
n\over \sum_{i=1}^n a_i}\le s_t\le {16 c_2n\log\log n\over
\sum_{i=1}^n a_i}$.  We have
\begin{eqnarray}
{n\over s_t}\le {n\over {8c_2 n\log\log n\over \sum_{i=1}^n
a_i}}={\sum_{i=1}^n a_i\over 8c_2\log\log n}.\label{n-mt-ineqn}
\end{eqnarray}
 Since
$\sum_{i=1}^na_i\ge 4$, $(\sum_{i=1}^na_i-{4\delta\over n})\ge
(1-\delta)(\sum_{i=1}^na_i)$.

By Lemma~\ref{lemma.4}, we have the inequality
\begin{eqnarray}
\appsum\ge (1-{\epsilon\over
2})(1-\delta)(\sum_{i=1}^na_i).\label{appsum-lower-ineqn}
\end{eqnarray}
By the setting at Phase 0 of the algorithm, we have
\begin{eqnarray}
(1-{\epsilon\over 2})(1-\delta)\ge {1\over 2}\cdot {3\over 4}=
{3\over 8}.\label{over-c0-ineqn}
\end{eqnarray}
We have
\begin{eqnarray}
 \appsum&\ge&(1-{\epsilon\over 2})(1-\delta)(\sum_{i=1}^na_i)\ \ \ \ \ \mbox{(by\ inequality~(\ref{appsum-lower-ineqn}))}\\
 &\ge&(1-{\epsilon\over 2})(1-\delta)({n\over s_t}\cdot 8c_2\log\log n)\ \ \ \ \ \mbox{(by\ inequality~(\ref{n-mt-ineqn}))}\\
&\ge&{3\over 8}({n\over s_t}\cdot 8c_2\log\log
n)\\
 &=& {3c_2n\log\log n\over s_t}. \ \ \ \ \ \mbox{(by\
 inequality~(\ref{over-c0-ineqn}))}
\end{eqnarray}
Thus, it makes the condition at line~\ref{until-condition} in
Approximate-Sum(.) be false. Thus, the algorithm stops at some stage
$t$ with $s_t\le {16 c_2n\log\log n\over \sum_{i=1}^n a_i}$ by the
setting at line~\ref{until-condition} in Approximate-Sum(.).

 Statement~\ref{clm5-c}. It follows from statement A and the
setting in line~\ref{until-condition} of the algorithm.
\end{proof}

%{\bf Claim \ref{app-sum-lemma}.6.}
\begin{lemma}\label{lemma.6}
The complexity of the algorithm is $O({\log {1\over
\alpha\delta}\over \delta^4}\min({n\over \sum_{i=1}^n
a_i},n)\log\log n)$. In particular, the complexity is
$O(\min({n\over \sum_{i=1}^n a_i},n)\log\log n)$ if $\alpha$
 is fixed in $(0,1)$.
\end{lemma}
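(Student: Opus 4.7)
The plan is to observe that the running time of Approximate-Sum$(\cdot)$ is dominated, up to constant factors, by the number of samples drawn in its final phase, and then to use Lemma~\ref{lemma.5} to bound that number.

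First I would confirm that each phase $t$ runs in $O(s_t)$ time, up to low-order factors for binning the $s_t$ samples into the $k=O(\log n/\delta)$ intervals of the $(\delta,\delta)$-partition and for accumulating $\appsum$ (which can be implemented e.g.\ by binary searching each sample against the precomputed boundaries $\pi_0,\pi_1,\ldots,\pi_{k-1}$, and then summing only over the $d_j\ge z$). Because $s_{t+1}=2s_t$, the total cost over phases $0,1,\ldots,t^\ast$ is a geometric series dominated by $O(s_{t^\ast})$, where $t^\ast$ denotes the last phase executed.

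Next I would condition on the high-probability event (of probability at least $1-Q_5$) on which statements~A, B, and~C of Lemma~\ref{lemma.5} all hold. When $\sum_{i=1}^n a_i\ge 4$, statement~B gives $s_{t^\ast}\le \tfrac{16\,c_2\,n\log\log n}{\sum_{i=1}^n a_i}$; when $\sum_{i=1}^n a_i< 4$, statement~C says the algorithm halts the first time $s_t\ge n$, so $s_{t^\ast}\le 2n$. A short case split (according to whether $\sum_{i=1}^n a_i$ exceeds $1$ or not) merges these into
\[
s_{t^\ast}\;=\;O\!\left(c_2\,\log\log n\cdot \min\!\left(\tfrac{n}{\sum_{i=1}^n a_i},\,n\right)\right).
\]

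Finally I would unpack the constants introduced in Phase~$0$. We have $c_2=\tfrac{12\xi_0}{(1-\delta)c_1}$ with $c_1=\tfrac{\delta^2}{2(1+\delta)}=\Theta(\delta^2)$, and by statement~\ref{x0-bound} of Lemma~\ref{prelimary-lemma}, $\xi_0=O\!\bigl(\log(1/(\alpha\delta))/\delta^2\bigr)$. Consequently $c_2=O\!\bigl(\log(1/(\alpha\delta))/\delta^4\bigr)$, and substitution yields the stated complexity $O\!\bigl(\tfrac{\log(1/(\alpha\delta))}{\delta^4}\min(\tfrac{n}{\sum_{i=1}^n a_i},n)\log\log n\bigr)$. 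For fixed $\alpha\in(0,1)$ (and $\epsilon$ fixed, so $\delta$ fixed) all the $\delta,\alpha$-dependent factors collapse to constants, giving the ``in particular'' clause.

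The main obstacle is bookkeeping rather than conceptual: I have to check that the per-phase side costs (binning the $s_t$ samples, the loop over intervals $I_j$ on lines~\ref{I-j-loop}--\ref{I-j-loop-end}, and computing $\appsum$) truly fit within $O(s_t)$ after a careful implementation, so that the geometric-series collapse is valid, and that the low-probability failure event of Lemma~\ref{lemma.5} is innocuous—which it is, since even in the bad case the algorithm halts by the time $s_t\ge n$ and so has a worst-case fallback cost of $O(n)$, making the stated bound valid as a high-probability (and expected) running time.
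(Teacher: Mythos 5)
Your proposal is correct and follows essentially the same route as the paper's proof: bound the last-phase sample count $s_{t^\ast}$ via statements~B and~C of Lemma~\ref{lemma.5}, collapse the doubling phases into a geometric series costing $O(s_{t^\ast})$, and unpack $c_2=O(\log\frac{1}{\alpha\delta}/\delta^4)$ from $\xi_0$ and $c_1$. Your added remarks on the per-phase implementation cost and on the harmlessness of the low-probability failure event (where the algorithm still halts once $s_t\ge n$) are details the paper leaves implicit, but they do not change the argument.
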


\begin{proof} We check the size $s_t$ of random samplings according by statement~\ref{clm5-b} and statement~\ref{clm5-c} of
Lemma~\ref{lemma.5} to determine when to stop the algorithm. We have
$\xi_0=O({\log {1\over \alpha\delta}\over \delta^2})$ by
Lemma~\ref{prelimary-lemma}. By the setting in
line~\ref{constant-setting-in-Approximate-Intervals} in
Approximate-Sum(.), we have
\begin{eqnarray*}
c_2&=& {12\xi_0\over (1-\delta)c_1} =O({\log {1\over
\alpha\delta}\over \delta^4}).
\end{eqnarray*}

 Since $s_i$ is doubled every phase,
and each  phase $i$ costs $O(s_i)$ time. The total time of the
algorithm is $O(s_1+s_2+\cdots +s_t)=O(s_t)$, where phase $t$ is the
last phase.

 The computational time complexity
   of the algorithm follows from statement~\ref{clm5-b} and statement~\ref{clm5-c} of
Lemma~\ref{lemma.5}.
\end{proof}

\begin{lemma}\label{app-sum-lemma} With probability at most
$\alpha$, at least one of the following statements is false after
executing the algorithm Approximate-Sum$(\epsilon,  \alpha, n, L)$:
\begin{enumerate}[1.]
%\item\label{item1-app-sum-lemma}
% For each $I_j$ with $\hat{A}(I_j,L)>0$, ${A(I_j,L) (1-\theta)}\le \hat{A}(I_j,L)\le {A(I_j,L)(1+\theta)}$;
%\item\label{item2-app-sum-lemma}
%If Approximate-Sum(.) stops at $m<n$, then
%$\sum_{a_i\in I_j\ and\ \hat{A}(I_j,L)=0}a_i\le {\delta^3\over 2}(\sum_{i=1}^n a_i)+{\gamma\over n}$;
%\item\label{item3-app-sum-lemma}
%$(1-\theta)(1-\delta)(1-\delta^3)(\sum_{i=1}^na_i-{4\gamma\over
%n})\le \appsum\le (1+\theta)(\sum_{i=1}^na_i)$;
\item\label{item3b-app-sum-lemma}
If $\sum_{i=1}^na_i\ge 4$, then $(1-\epsilon)(\sum_{i=1}^na_i)\le
\appsum\le (1+{\epsilon\over 2})(\sum_{i=1}^na_i)$;
\item\label{item3c-app-sum-lemma}
If $\sum_{i=1}^na_i< 4$, then $\appsum=\sum_{i=1}^na_i$; and
\item\label{item4-app-sum-lemma}
It runs in $O({\log {1\over \alpha\delta}\over \delta^4}\min({n\over
\sum_{i=1}^n a_i},n)\log\log n)$ time. In particular, the complexity
of the algorithm is $O(\min({n\over \sum_{i=1}^n a_i},n)\log\log n)$
if $\alpha$ is fixed in $(0,1)$.
\end{enumerate}
\end{lemma}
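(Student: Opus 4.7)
The plan is to control the total failure probability by a union bound across all phases, and then argue that under the resulting good event the three claimed conclusions all hold. First, I would invoke Lemma~\ref{lemma.5}: its failure probability is $Q_5=(k+1)(\log n)g(\delta)^{z/2}$, and the choice of $\xi_0$ at line~\ref{xi0-setting} forces $Q_5\le \alpha/8$. I would additionally apply Lemma~\ref{lemma.2} at the terminating phase $t^\star$ of the algorithm, which contributes a further failure probability $2kg(\delta)^{\xi_0\log\log n}\le \alpha/8$. A union bound then gives a total failure mass at most $\alpha$. On the complementary good event: the per-phase accuracy (statement~(i) of Lemma~\ref{lemma.1}) holds for every executed phase; statements~A, B, C of Lemma~\ref{lemma.5} are all true; and statement~(ii) of Lemma~\ref{lemma.2} holds at $t^\star$.

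Under the good event I would split into the two cases of the conclusion. If $\sum_{i=1}^n a_i\ge 4$, statement~A of Lemma~\ref{lemma.5} prevents termination while $s_t<c_2 n(\log\log n)/\sum_i a_i$, and statement~B forces termination with $s_{t^\star}\le 16c_2 n(\log\log n)/\sum_i a_i$. Thus at $t^\star$ we have $s_{t^\star}\ge c_2 n(\log\log n)/\sum_i a_i$, so Lemma~\ref{lemma.2} provides the hypothesis that every interval $I_j$ with $A(I_j,L)\ge c_1\sum_i a_i$ satisfies $d_j\ge z$ and has a $(1\pm\delta)$-accurate $\hat A(I_j,L)$; combined with Lemma~\ref{lemma.1}, the hypothesis of Lemma~\ref{lemma.4} is fully met. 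Lemma~\ref{lemma.4} then yields $(1-\epsilon/2)(\sum_i a_i - 4\delta/n)\le \appsum\le (1+\delta)\sum_i a_i$. The upper bound converts to $(1+\epsilon/2)\sum_i a_i$ since $\delta=\epsilon/6<\epsilon/2$; for the lower bound, $\sum_i a_i\ge 4$ and $\delta=\epsilon/6$ together force $(\epsilon/2)\sum_i a_i\ge (1-\epsilon/2)(4\delta/n)$, which rearranges to $(1-\epsilon/2)(\sum_i a_i-4\delta/n)\ge (1-\epsilon)\sum_i a_i$, yielding item~\ref{item3b-app-sum-lemma}. If instead $\sum_i a_i<4$, statement~C of Lemma~\ref{lemma.5} directly gives item~\ref{item3c-app-sum-lemma}: the algorithm reaches the branch $s_t\ge n$ and outputs $\appsum=\sum_{i=1}^n a_i$ exactly.

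The running time in item~\ref{item4-app-sum-lemma} is already packaged as Lemma~\ref{lemma.6}, so it transfers verbatim once the case split above is made (the two arms of the $\min$ correspond exactly to $\sum_i a_i\ge 4$ and $\sum_i a_i<4$). The main obstacle I anticipate is the probabilistic bookkeeping: ensuring that the failure events from Lemmas~\ref{lemma.1}, \ref{lemma.2}, and~\ref{lemma.5} jointly fit inside the $\alpha$ budget after a union bound across the $O(\log n)$ phases, and verifying that the additive $4\delta/n$ slack in Lemma~\ref{lemma.4} is absorbed by the multiplicative slack between $(1-\epsilon/2)$ and $(1-\epsilon)$ precisely when $\sum_i a_i\ge 4$. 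Everything else is a careful but routine threading of the preceding lemmas.
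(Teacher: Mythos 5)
Your proposal is correct and follows essentially the same route as the paper: a union bound over the failure events of Lemmas~\ref{lemma.1}, \ref{lemma.2}, and~\ref{lemma.5} (kept within $\alpha$ by the choice of $\xi_0$ at line~\ref{xi0-setting}), then under the good event threading Lemma~\ref{lemma.4} and statements~A--C of Lemma~\ref{lemma.5} for the two cases $\sum_i a_i\ge 4$ and $\sum_i a_i<4$, with the running time imported from Lemma~\ref{lemma.6}. Your version is in fact slightly more careful than the paper's in pinning down that the terminating phase satisfies the hypothesis of Lemma~\ref{lemma.2} and in writing out how the additive $4\delta/n$ slack is absorbed, but the decomposition and key steps are identical.
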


\begin{proof}
As $s_t$ is doubled each new phase in Approximate-Intervals$(.)$,
the number of phases is at most $\log n$. With probability at most
$(\log n)(Q_1+Q_2)+Q_5\le \alpha$ (by line~\ref{first-alpha-setting}
in Approximate-Intervals$(.)$), at least one of the statements (i)
in Lemma \ref{lemma.1}, (ii) in Lemma \ref{lemma.2}, A, B, C in
Lemma \ref{lemma.5} is false.

%Statement (i) in Claim \ref{app-sum-lemma}.1 implies
%statement~\ref{item1-app-sum-lemma} of Lemma~\ref{app-sum-lemma}
%since $d_j\ge z$ if and only if $\hat{A}(I_j,L)>0$.

Assume that the statements (i) in Lemma \ref{lemma.1}, (ii) in Lemma
\ref{lemma.2}, A, B, and C in Lemma \ref{lemma.5} are all true.

%Statement~\ref{item3-app-sum-lemma} follows from Lemma
%\ref{lemma.4}.
 Statement~\ref{item3b-app-sum-lemma}: The condition of Statement~\ref{item3b-app-sum-lemma} implies $n\ge
4$. By Lemma \ref{lemma.4},  we have
\begin{eqnarray}
(1-{\epsilon\over 2})(\sum_{i=1}^na_i-{4\delta\over n})\le
\appsum\le (1+\delta)(\sum_{i=1}^na_i); \label{appsum-range-ineqn}
\end{eqnarray}

 Since
$\sum_{i=1}^na_i\ge 4$, we have
\begin{eqnarray}
(\sum_{i=1}^na_i-{4\delta\over n})\ge
(1-\delta)(\sum_{i=1}^na_i).\label{middle-ineqn}
\end{eqnarray}
We have the inequality
\begin{eqnarray}
\appsum&\ge& (1-{\epsilon\over
2})(1-\delta)(\sum_{i=1}^na_i)\ \ \ \ \ \mbox{(by\ inequalities\ (\ref{middle-ineqn})\ and\ (\ref{appsum-range-ineqn}))}\\
&\ge& (1-\epsilon)(\sum_{i=1}^na_i). \ \ \ \ \ \mbox{(by Phase 0 in
Approximate-Sum(.))}
\end{eqnarray}

 Statement~\ref{item3c-app-sum-lemma} follows from
Statement~\ref{clm5-c} of Lemma~\ref{lemma.5}.

Statement~\ref{item4-app-sum-lemma} for the running time follows
from Lemma \ref{lemma.6}.

 Thus,
with probability at  most $\alpha$, at least one of the
statements~\ref{item3b-app-sum-lemma} to \ref{item4-app-sum-lemma}
is false.
\end{proof}

Now we have the proof for our main theorem.

\begin{proof}[for Theorem~\ref{main-thm}]
Let $\alpha={1\over 4}$ and $\epsilon\in (0,1)$.  It follows from
Lemma~\ref{app-sum-lemma} via a proper setting for those parameters
in the algorithm Approximate-Sum(.).

 The $(\delta,\delta)$-partition $P:$ $I_1\cup
I_2\ldots \cup I_k$ for $[0,1]$ can be generated in $O({\log n+\log
{1\over \delta}\over \delta})$ time by Lemma~\ref{prelimary-lemma}.
Let $L$ be a list of $n$ numbers in $[0,1]$. Pass $\delta, \alpha,
P, n,$ and $L$ to Approximate-Sum(.), which returns an approximate
sum $\appsum$.

 By statement~\ref{item3b-app-sum-lemma} and
statement~\ref{item3c-app-sum-lemma} of Lemma~\ref{app-sum-lemma},
we have an $(1+\epsilon)$-approximation for the sum problem with
failure probability at most $\alpha$. The computational time is
bounded by $O({\log {1\over \alpha\delta}\over \delta^4}\min({n\over
\sum_{i=1}^n a_i},n)\log\log n)$ by
statement~\ref{item4-app-sum-lemma} of  Lemma~\ref{app-sum-lemma}.
\end{proof}

\begin{definition} Let $f(n)$ be a function from $n$ to $(0,n]$ and a parameter $c>1$.
Define $\sum(c,f(n))$ be the class of sum problem with an input of
nonnegative numbers $a_1,\cdots, a_n$ with $\sum_{i=1}^na_i\in
[{f(n)\over c}, cf(n)]$.
\end{definition}

\begin{corollary} Assume that $f(n)$ is a function from $n$ to $(0,n]$ and $c$ is a given constant $c$ greater than $1$.
There is a $O({n (\log\log n)\over f(n)})$ time algorithm such that
given  a list of nonnegative numbers $a_1,a_2,\cdots, a_n$ in
$\sum(c, f(n))$, it gives a $(1-\epsilon)$-approximation.
\end{corollary}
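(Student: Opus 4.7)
The plan is to derive the corollary as a direct specialization of Theorem~\ref{main-thm}. By the definition of $\sum(c, f(n))$, every list $L$ in this class satisfies $\sum_{i=1}^n a_i \geq f(n)/c$. Plugging this lower bound into the running time estimate $O(n \log\log n / \sum_{i=1}^n a_i)$ from Theorem~\ref{main-thm} immediately yields
\[
O\!\left(\frac{n \log\log n}{\sum_{i=1}^n a_i}\right) \;\leq\; O\!\left(\frac{c\, n \log\log n}{f(n)}\right) \;=\; O\!\left(\frac{n \log\log n}{f(n)}\right),
\]
since $c > 1$ is a fixed constant and therefore absorbed into the big-O.

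Concretely, I would invoke Approximate-Sum$(\epsilon', 1/4, n, L)$ for a value $\epsilon' = \Theta(\epsilon)$ chosen small enough that any $(1+\epsilon')$-approximation is automatically a $(1-\epsilon)$-approximation in the sense required by the corollary (for a two-sided reading, $\epsilon' = \epsilon/(1-\epsilon)$ works; for a one-sided reading, no adjustment at all is needed since a $(1+\epsilon)$-approximation is trivially an approximation from below). By Theorem~\ref{main-thm}, with probability at least $3/4$ the returned value $\appsum$ is the desired approximation of $\sum_{i=1}^n a_i$, and the running time matches the displayed expression above. The fixed failure probability $\alpha = 1/4$ only affects the internal constants $\xi_0$ and $c_2$ through Lemma~\ref{prelimary-lemma} and Lemma~\ref{lemma.6}, so the final bound remains $O(n \log\log n / f(n))$.

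The only subtlety worth flagging is that Theorem~\ref{main-thm} is stated for entries in $[0,1]$, whereas the corollary merely says "nonnegative". This is the standing normalization used throughout the paper; for genuinely larger entries one would first apply the rescaling sketched in the introduction, dividing every $a_i$ by a common upper bound $M$, running Approximate-Sum on the rescaled list in $[0,1]$, and multiplying the output by $M$. I do not anticipate a real obstacle here: once $\sum_{i=1}^n a_i \geq f(n)/c$ is substituted into the complexity bound of Theorem~\ref{main-thm}, the corollary is essentially a one-line reduction, and the bulk of the work has already been done in proving the main theorem.
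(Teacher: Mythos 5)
Your proposal is correct and follows the same route as the paper, which simply cites Theorem~\ref{main-thm}: substituting the lower bound $\sum_{i=1}^n a_i \ge f(n)/c$ from the definition of $\sum(c,f(n))$ into the theorem's running-time bound and absorbing the constant $c$ is exactly the intended one-line reduction. Your additional remarks on adjusting $\epsilon$ and on the $[0,1]$ normalization only make explicit what the paper leaves implicit.
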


\begin{proof}
It follows from Theorem~\ref{main-thm}.
\end{proof}

We can extend our sublinear time algorithm to the more general list
of nonnegative elements.

\begin{theorem}
Assume that $\epsilon$ is a positive constant in $(0,1)$. Then there
is an $O({Mn (\log\log n)\over \sum_{i=1}^n a_i})$ time algorithm to
compute $(1+\epsilon)$-approximation for a list of nonnegative
numbers $a_1,\cdots, a_n$ of in the range $[0,M]$.
\end{theorem}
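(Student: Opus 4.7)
The plan is a straightforward reduction to Theorem~\ref{main-thm} via rescaling. Given the input list $L = (a_1, \ldots, a_n)$ with each $a_i \in [0, M]$, I would define a virtual scaled list $L' = (a_1', \ldots, a_n')$ where $a_i' = a_i / M$, so that each $a_i' \in [0,1]$. Since $\sum_{i=1}^n a_i' = \frac{1}{M}\sum_{i=1}^n a_i$, any $(1+\epsilon)$-approximation $s'$ of $\sum a_i'$ yields a $(1+\epsilon)$-approximation $M \cdot s'$ of $\sum a_i$. Note that the scaling factor does not need to be applied up front; whenever the algorithm Approximate-Sum$(\epsilon, \alpha, n, L)$ inspects a sampled item $a_{i_h}$, we simply feed it $a_{i_h}/M$, so the partition, sampling, and bookkeeping go through unchanged on a list of $[0,1]$ values.

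The key steps I would carry out, in order, are: (i) run Approximate-Sum$(\epsilon, \alpha, n, L')$ with $\alpha = 1/4$, where $L'$ is the implicit scaled list; (ii) invoke Theorem~\ref{main-thm} to conclude that the returned value $\appsum'$ satisfies $\frac{\sum a_i'}{1+\epsilon} \le \appsum' \le (1+\epsilon)\sum a_i'$ in time $O(n (\log\log n)/\sum a_i')$; (iii) output $\appsum := M \cdot \appsum'$, giving
\[
\frac{\sum_{i=1}^n a_i}{1+\epsilon} \;\le\; \appsum \;\le\; (1+\epsilon)\sum_{i=1}^n a_i.
\]
For the time bound, substituting $\sum a_i' = (\sum a_i)/M$ into the complexity of Theorem~\ref{main-thm} gives
\[
O\!\left(\frac{n\,\log\log n}{\sum_{i=1}^n a_i'}\right) \;=\; O\!\left(\frac{M n\,\log\log n}{\sum_{i=1}^n a_i}\right),
\]
which matches the claimed bound.

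There is no real obstacle here; the only thing worth being careful about is that scaling by $1/M$ is a deterministic, constant-time per-access operation, so it does not inflate the sampling cost or the phase count in Approximate-Sum$(.)$. In particular, the $(\delta,\delta)$-partition of $[0,1]$ used internally still contains every scaled sample, and the correctness lemmas (Lemmas~\ref{lemma.1}--\ref{lemma.5}) and the aggregate failure analysis of Lemma~\ref{app-sum-lemma} apply verbatim to $L'$. Thus the reduction is tight and yields precisely the stated running time.
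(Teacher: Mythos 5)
Your proposal is correct and matches the paper's own proof, which likewise rescales each $a_i$ to $a_i/M$ and invokes Theorem~\ref{main-thm}; your version merely spells out the substitution $\sum_i a_i' = (\sum_i a_i)/M$ in the time bound, which the paper leaves implicit.
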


\begin{proof}
A list of nonnegative elements $a_1,\cdots, a_n$ can be converted
into the list ${a_1\over M},\cdots, {a_n\over M}$ in $[0,1]$. It
follows from Theorem~\ref{main-thm}.
\end{proof}

\section{Lower Bound }

We show a lower bound for those sum problems with bounded sum of
sizes $\sum_{i=1}^n a_i$. The lower bound always matches the upper
bound.

\begin{theorem}\label{strong-lower-bound-theorem}  Assume $f(n)$ is
an nondecreasing unbounded function from $N$ to $N$ with
$f(n)=o(n)$. Every randomized $(\sqrt{c}-\epsilon)$-approximation
algorithm for the sum problem in $\sum(c, f(n))$ needs
$\Omega({n\over f(n)})$ time, where $c$ is a constant greater than
$1$, and $\epsilon$ is an arbitrary small constant in
$(0,\sqrt{c}-1)$.
\end{theorem}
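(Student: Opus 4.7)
The plan is to apply Yao's minimax principle. I will construct two distributions over instances in $\sum(c,f(n))$ such that no deterministic algorithm using $o(n/f(n))$ probes can tell them apart, yet any $(\sqrt{c}-\epsilon)$-approximation algorithm must tell them apart. For each sufficiently large $n$, let $k_A=\lfloor f(n)/c\rfloor$ and $k_B=\lceil c f(n)\rceil$; since $f(n)=o(n)$, $k_B\le n$ for large $n$. A Type-A instance is a $0/1$ vector with exactly $k_A$ ones placed in a uniformly random set of coordinates; Type-B is analogous with $k_B$ ones. Both sums lie in $[f(n)/c,\,cf(n)]$, so both instances belong to $\sum(c,f(n))$. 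The hard distribution $\mathcal D$ picks each type with probability $1/2$.

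The first step is to verify that the two allowed output intervals are disjoint. A valid output $v$ for Type A must satisfy $v\in [k_A/(\sqrt{c}-\epsilon),\,k_A(\sqrt{c}-\epsilon)]$, and for Type B, $v\in [k_B/(\sqrt{c}-\epsilon),\,k_B(\sqrt{c}-\epsilon)]$. Disjointness (with A-interval to the left) is equivalent to $(\sqrt{c}-\epsilon)^2 < k_B/k_A$. Because $\epsilon>0$ we have $(\sqrt{c}-\epsilon)^2<c$, while $k_B/k_A\to c^2>c$ as $n\to\infty$, so the inequality holds for all large $n$. Hence no single value $v$ is a valid approximation for both types simultaneously.

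Next comes the sampling argument. Fix any deterministic adaptive algorithm $\mathcal A$ that makes $q=q(n)=o(n/f(n))$ probes, and consider the execution branch in which every probed coordinate turns out to be $0$; on this branch, since the algorithm's future queries depend only on past responses, the set of probed indices and the final output $v$ are a fixed function of $\mathcal A$. The probability that a random Type-A instance actually routes $\mathcal A$ down this all-zero branch is at least $\binom{n-k_A}{q}/\binom{n}{q}\ge (1-k_A/(n-q))^q\ge 1-qk_A/(n-q)$, and since $qk_A/n = o(n/f(n))\cdot O(f(n)/n)=o(1)$, this probability is $1-o(1)$; the same bound, using $qk_B/n=o(c)=o(1)$, holds for Type B. Because the fixed value $v$ is invalid for at least one of the two types, $\mathcal A$ errs on $\mathcal D$ with probability at least $(1-o(1))/2>1/3$ for large $n$. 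Yao's principle then implies that any randomized $(\sqrt{c}-\epsilon)$-approximation algorithm with constant success probability above $2/3$ must use $\Omega(n/f(n))$ probes.

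The only place where care is needed is the all-zero probability estimate: one must keep track that the algorithm is adaptive, so the set of $q$ indices visited on the all-zero branch is determined before looking at the actual input, which is what justifies the hypergeometric calculation. The remaining bookkeeping — checking that the rounded values $k_A,k_B$ still lie in $\sum(c,f(n))$ and that $k_A\ge 1$ once $f(n)$ is large enough (which happens for all large $n$ since $f(n)$ is unbounded and nondecreasing) — is routine.
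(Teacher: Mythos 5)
Your proof is correct in substance but takes a genuinely different, and in fact more careful, route than the paper. The paper uses two \emph{fixed} lists ($f(n)$ entries equal to $1/c$ versus $f(n)$ entries equal to $1$, zeros elsewhere) and argues informally that a uniform sampler sees a nonzero entry with probability only $o(1)$, so the two lists get the same output; it never randomizes the positions of the nonzero items nor invokes any minimax argument, so strictly speaking it only rules out algorithms whose probes are oblivious uniform samples. You instead randomize the support, condition a deterministic adaptive algorithm on its all-zero branch, bound the hypergeometric probability of staying on that branch, and close with Yao's principle --- this handles arbitrary adaptive randomized query strategies and makes the ``cannot distinguish'' step rigorous. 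Your disjointness computation $(\sqrt{c}-\epsilon)^2<c\le k_B/k_A$ plays the same role as the paper's check that $kf(n)/c<f(n)/k$ for $k=\sqrt{c}-\epsilon$; you just give yourself a larger gap (ratio tending to $c^2$ rather than $c$), which is harmless. The one slip is the rounding direction: with $k_A=\lfloor f(n)/c\rfloor$ the Type-A sum can drop below $f(n)/c$, and with $k_B=\lceil cf(n)\rceil$ the Type-B sum can exceed $cf(n)$, so these instances may fall outside $\sum(c,f(n))$; take $k_A=\lceil f(n)/c\rceil$ and $k_B=\lfloor cf(n)\rfloor$ instead (the ratio $k_B/k_A$ still tends to $c^2$ since $f(n)\to\infty$, so nothing else in the argument changes).
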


\begin{proof}
The first list $L_1$ contains $f(n)$ elements of size ${1\over c}$,
and its rest $n-f(n)$ items are  $0$. The sum of numbers in the
first list is ${f(n)\over c}$. Therefore, the first list is a sum
problem in $\sum(c,f(n))$.

The second list $L_2$ contains $f(n)$ elements of value $1$, and its
rest $n-f(n)$ items are  $0$. The sum of numbers in the second list
is $f(n)$. Therefore, the second list is a sum problem in
$\sum(c,f(n))$.

Assume that an algorithm only has computational time $o({n\over
f(n)})$ for computing $k$-approximation for sum problems in $\sum(c,
f(n))$ with $k=(\sqrt{c}-\epsilon)$. For each uniform random
sampling, with probability ${f(n)\over n}$, it gets an number
greater than $0$ in each $L_i$. The algorithm has an $o(1)$
probability to access at least one item greater than $0$ in each
list in a path of computation. Therefore,  $L_1$ and $L_2$ have the
same output for approximation by the same randomized algorithm.  If
$s$ is a $k$-approximation for the both sum problems, we have
\begin{eqnarray}
{f(n)\over ck}&\le& s\le {kf(n)\over c},\ \ \ \ \ \mbox{and} \\
{f(n)\over k}&\le& s\le kf(n)
\end{eqnarray}

We have ${kf(n)\over c}\ge {f(n)\over k}$ for $k=\sqrt{c}-\epsilon$.
This brings  a contradiction.

\end{proof}

\begin{corollary}
There is no $o({n\over \sum_{i=1}^n a_i})$ time randomized
approximation scheme algorithm for the sum problem.
\end{corollary}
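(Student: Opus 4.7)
The plan is to derive this corollary directly from Theorem~\ref{strong-lower-bound-theorem} by a short parameter-tuning argument. Suppose, aiming for a contradiction, that a randomized approximation scheme $A$ exists whose running time on every input $a_1,\ldots,a_n$ is $o(n/\sum_{i=1}^n a_i)$. Because $A$ is an approximation scheme, for any desired ratio $1+\epsilon$ with $\epsilon \in (0,1)$ it can be instantiated to produce a randomized $(1+\epsilon)$-approximation within this time bound.

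First I would fix any $\epsilon \in (0,1)$ and pick a constant $c > 1$ with $\sqrt{c} > 1 + \epsilon$; the explicit choice $c = (1 + 2\epsilon)^2$ gives $\sqrt{c} = 1 + 2\epsilon$, so $\sqrt{c} - 1 > \epsilon$. Choosing any $\epsilon' \in (0,\, \sqrt{c} - 1 - \epsilon)$ then guarantees both $\epsilon' \in (0, \sqrt{c}-1)$ (as required by Theorem~\ref{strong-lower-bound-theorem}) and $1 + \epsilon \le \sqrt{c} - \epsilon'$. Since any $(1+\epsilon)$-approximation is automatically a $(\sqrt{c} - \epsilon')$-approximation, the instantiation of $A$ at this $\epsilon$ yields a randomized $(\sqrt{c} - \epsilon')$-approximation algorithm for the sum problem on arbitrary instances.

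Next I would pick any nondecreasing unbounded $f : N \to N$ with $f(n) = o(n)$ and restrict attention to inputs drawn from $\sum(c, f(n))$. For such inputs $\sum_{i=1}^n a_i \in [f(n)/c,\, cf(n)]$, so $n/\sum_{i=1}^n a_i = \Theta(n/f(n))$. Consequently the assumed running time $o(n/\sum_{i=1}^n a_i)$ is also $o(n/f(n))$ on these instances, contradicting the $\Omega(n/f(n))$ lower bound of Theorem~\ref{strong-lower-bound-theorem} for $(\sqrt{c} - \epsilon')$-approximation in $\sum(c, f(n))$.

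The only thing that needs care is the ratio bookkeeping --- namely selecting $c$ and $\epsilon'$ so that the hypotheses of Theorem~\ref{strong-lower-bound-theorem} are satisfied while still dominating $1+\epsilon$ --- and this is handled by the explicit choice above. I do not anticipate any deeper obstacle, since the corollary is essentially a repackaging of the quantitative theorem using the ``for every fixed $\epsilon$'' quantifier implicit in an approximation scheme.
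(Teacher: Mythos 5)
Your proof is correct and matches the paper's intent: the paper states this corollary without proof as an immediate consequence of Theorem~\ref{strong-lower-bound-theorem}, and your parameter choice ($c=(1+2\epsilon)^2$, $\epsilon'\in(0,\sqrt{c}-1-\epsilon)$, any nondecreasing unbounded $f(n)=o(n)$) is exactly the routine instantiation needed to make that implication explicit. No gaps.
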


\section{Conclusions}

We studied the approximate sum in a few models. We show that the
approximate sum can be computed in time $\timecomplexity$ if the
input list in the range $[0,1]$. Our lower bound almost matches the
upper bound. An interesting theoretical problem is to close the
small gap between the lower bound and upper bound for the
approximate sum problem.

%\bibliographystyle{abbrv}
%\bibliography{bib}

\begin{thebibliography}{1}

\bibitem{AlonDuffieldLundThorup05}
N.~Alon, N.~Duffield, C.~Lund, and M.~Thorup.
\newblock Estimating arbitrary subset sums with few probes.
\newblock In {\em Proc. PODS}, pages 317--325, 2005.

\bibitem{BroderFonturaJosifovskiKumarMotwaniNabarPanigrahyTomkinsXu06}
A.~Broder, M.~Fontura, V.~Josifovski, R.~Kumar, R.~Motwani,
S.~Nabar,
  R.~Panigrahy, A.~Tomkins, and Y.~Xu.
\newblock Estimating corpus size via queries.
\newblock In {\em Proceedings of the 15th ACM international conference on
  Information and knowledge management (CIKM '06)}, pages 594--603, 2006.

\bibitem{CanettiEvenGoldreich95}
R.~Canetti, G.~Even, and O.~Goldreich.
\newblock Lower bounds for sampling algorithms for estimating the average.
\newblock {\em Information Processing Letters}, 53:17--25, 1995.

\bibitem{DuffieldLundThorup05}
N.~Duffield, C.~Lund, , and M.~Thorup.
\newblock Learn more, sample less: control of volume and variance in network
  measurements.
\newblock {\em IEEE Trans. on Information Theory}, 51:1756--1775, 2005.

\bibitem{Hoefding63}
W.~Hoefding.
\newblock Probability inequalities for sums of bounded random variables.
\newblock {\em Journal of the American Statistical Association}, 58:13--30,
  1963.

\bibitem{MotwaniPanigrahyXu07}
R.~Motwani, R.~Panigrahy, and Y.~Xu.
\newblock Estimating sum by weighted sampling.
\newblock In {\em Proceedings of the 34th International Colloquium on Automata,
  Languages and Programming}, pages 53--64, 2007.

\bibitem{MotwaniRaghavan00}
R.~Motwani and P.~Raghavan.
\newblock {\em Randomized Algorithms}.
\newblock Cambridge University Press, 2000.

\end{thebibliography}

\end{document}